\documentclass{article}
\usepackage{amsthm, amsmath,amsfonts,graphicx, amssymb,bm,geometry, mathtools}
\geometry{left=1.0in, right=1.0in, top=1.0in, bottom=1.0in}

\usepackage[utf8]{inputenc} 

\usepackage{enumitem}

\usepackage[T1]{fontenc}    

\usepackage{hyperref,cleveref}
\usepackage[numbers]{natbib}

\usepackage{algorithm}
\usepackage{algorithmic}
\usepackage{thm-restate}
\urlstyle{same}

\makeatletter
\makeatother

\usepackage{thm-restate}
\theoremstyle{plain}
\newtheorem{theorem}{Theorem}[section]
\newtheorem{corollary}[theorem]{Corollary}

\newtheorem{proposition}[theorem]{Proposition}

\newtheorem{assumption}[theorem]{Assumption}

\theoremstyle{definition}
\newtheorem{definition}[theorem]{Definition}
\newtheorem{remark}[theorem]{Remark}
\newtheorem{example}[theorem]{Example}

\DeclareMathOperator*{\argmin}{arg\,min}

\usepackage{xcolor}

\title{Cooperation in Threshold Public Projects with Binary Actions\footnote{A short version of this paper is accepted by IJCAI'21.  Yiling Chen and Fang-Yi Yu are supported by the National Science Foundation under grants IIS 2007887}}
\author{Yiling Chen\thanks{Harvard University \texttt{yiling@seas.harvard.edu}}\and Biaoshuai Tao\thanks{John Hopcroft Center for Computer Science, Shanghai Jiao Tong University \texttt{bstao@sjtu.edu.cn}} \and Fang-Yi Yu\thanks{Harvard University \texttt{fangyiyu@seas.harvard.edu}}}
\date{}

\begin{document}

\maketitle

\begin{abstract}
When can cooperation arise from self-interested decisions in public goods games? And how can we help agents to act cooperatively? We examine these classical questions in a \emph{pivotal participation game}, a variant of public good games, where heterogeneous agents make binary participation decisions on contributing their endowments, and the public project succeeds when it has enough contributions. 

We prove it is NP-complete to decide the existence of a cooperative Nash equilibrium such that the project succeeds. 
We also identify two natural special scenarios where this decision problem is tractable.  
We then propose two algorithms to help cooperation in the game. Our first algorithm adds an external investment to the public project, and our second algorithm uses matching funds.  
We show that the cost to induce a cooperative Nash equilibrium is near-optimal for both algorithms.  Finally, the cost of matching funds can always be smaller than the cost of adding an external investment. Intuitively, matching funds provide a greater incentive for cooperation than adding an external investment does.
\end{abstract}

\section{Introduction}

Consider a town that seeks donations of collectibles from its residents with the goal to open a public museum if enough collectibles are received, or countries deciding whether to participate in an environmental agreement (e.g. Kyoto Protocol)~\citep{kaul1999global,finus2002game,battaglini2016participation} to collectively reduce greenhouse gas emission. These settings have the nature of public goods games---the non-exclusiveness of a public good may lead to free ride but the success of the project requires a certain level of participation. Can cooperative behavior arise from self-interested decisions to lead to success in the project? How can we help agents to act cooperatively? 

We propose a variant of public goods games, \emph{pivotal participation game}, to answer these two classic questions. Each agent has an endowment and decides whether to contribute to a public project. If the total contribution is enough for the project to succeed, everyone regardless of their contribution receives rewards from the project.  

Our pivotal participation game has two characteristics. First, participants make binary decisions on participation (contributing their endowment or not). This allows us to capture the indivisibility of contributions in collectible donation and participation agreement. Restriction to binary actions is also suitable for other public goods domains such as vaccination~\citep{brito1991externalities}, adoption of innovations~\citep{DYBVIG1983231}, public health insurance~\citep{goldstein1976group}, or even practicing social distancing~\citep{cato2020social}. Second, participants have heterogeneous endowments and rewards. Many well-studied public goods games are symmetric among agents: agents have symmetric endowments~\citep{van1983minimal,palfrey1984participation} and the reward is uniformly distributed among agents~\citep{rapoport1988provision} or only depends on the number of contributing agents~\citep{hirshleifer1983weakest,jackson2010social}.  In our setting, agents' endowments (value of collectibles, cost of reducing greenhouse gas emissions) are heterogeneous, and their reward from the project's success (values of visiting museum or benefits from clean air) can also be different.

We first prove that it is NP-complete to decide whether there exists a Nash equilibrium such that the project succeeds (Theorem~\ref{thm:npc}). This result suggests large groups may struggle to cooperate in public projects, which is supported in the literature.~\citep{olson2009logic}  Then we show that successful cooperation is always possible in two natural scenarios: 1) agents' endowments are small or the reward levels are large enough in \Cref{sec:ratio} and 2) the minimal successful cooperation has the same number of agents in \Cref{sec:balanced}

Then we develop algorithms to help achieving cooperation.  We consider two natural interventions: The algorithms can 1) add an external investment $\delta$ to the public project in \Cref{sec:external}, or 2) commit to a matching fund so that for every $1$ value of investment from agents, the algorithms add $\rho$ to the project in \Cref{sec:matching}.
Both interventions are common in practice.~\citep{baker2009matching,karlan2020can}  We want to minimize the additional cost due to those interventions.  We first show that it is NP-hard to approximate the optimal cost under both interventions to within any finite multiplicative factors.  Then, we design two algorithms with near-optimal cost in the additive sense.  (Theorem~\ref{thm:approximation_external} and \ref{thm:approximation_matchingfunds}) Finally, we compare these two interventions and prove that matching funds can promote cooperation with no more cost than adding an external investment can. (Proposition~\ref{prop:compare})   This provides a theoretical explanation for the efficacy of matching funds in the field.~\citep{karlan2007does}

\paragraph{Related works.}
Our work considers the computational issue of searching Nash equilibria of public goods games, and designs algorithms to help cooperation.  Our model falls in the category of threshold public goods with binary actions.  However, for most of the existing models in this category, computational issue has seldom been considered and successful cooperation can be found efficiently.  One pioneering work ~\citep{palfrey1984participation}  considers participation games where agents have homogeneous unit endowment. Thus, participation by any subset of sufficient number of agents is an equilibrium.  \citeauthor{rapoport1988provision} [\citeyear{rapoport1988provision}] and \citeauthor{bolle2014class} [\citeyear{bolle2014class}] consider that agents' rewards only depend on whether the project succeeds or not, while ours also depends on the project's total value of contributions.  As a result, successful cooperation can be computed efficiently in their games.  Another important category of public goods games is network public goods game~\citep{bramoulle2007public}.  For this category, recent works also consider the computational complexity of finding equilibria and design algorithms to help cooperation~\citep{Yu_Zhou_Brantingham_Vorobeychik_2020,kempe2020inducing}.  However, the main interest of their games are network structures instead of heterogeneity of agents' endowments and reward levels.

\section{Model}\label{sec:model}
We consider the \emph{pivotal participation game} where agents have heterogeneous endowment and binary actions deciding whether to invest all her endowment or none. 
Formally, a group of $n$ agents wants to fund a project which has a \emph{threshold} $\tau\in \mathbb{R}_{\ge0}$ for success.  Each agent $i\in [n]$ has an indivisible \emph{endowment} that has an equivalent value of $e_i\in \mathbb{R}_{\ge 0}$, and she can decide whether to participate by contributing all her endowment or free ride, $a_i \in \{0,1\}$.  Thus, agents' strategy profile can be denoted by $S := \{i:a_i = 1\}\subseteq [n]$ which is the set of agents who participate.  Let $e(S):= \sum_{j\in S} e_j$ be the total value of contributions made by agents in $S$.  The project succeeds if $e(S)\ge \tau$.
When the project succeeds, each agent's reward depends on her \emph{reward level} $0<m_i<1$.  Specifically, given a strategy profile $S$, agent $i$'s utility is

\begin{equation}\label{eqn:utility}
    U_i(S) = e_i\mathbf{1}[i\notin S]+m_i e(S)\mathbf{1}\left[e(S)\ge \tau\right].
\end{equation}
We use $(\vec{e}, \vec{m}, \tau)$ to denote a pivotal participation game instance, where $\vec{e} = (e_1, \ldots, e_n)$ encodes each agent's endowment and $\vec{m} = (m_1, \ldots, m_n)$ encodes each agent's reward level.  
A strategy profile $S$ is a \emph{Nash equilibrium} if no one would unilaterally deviate: If $i$ invests, i.e., $i\in S$, $U_i(S)\ge U_i(S\setminus\{i\})$.  If $i\notin S$, $U_i(S)\ge U_i(S\cup\{i\})$.  
A strategy profile $S$ is a \emph{cooperative Nash equilibrium} if it is a Nash equilibrium and the project succeeds,  $e(S)\ge \tau$.

\section{Best Response and Nash Equilibria}
First note that, fixing other agents' actions, an agent $i$ will free ride if the project succeeds without her investment or cannot succeed with her investment.  To see this, let $S_{-i} = \{j\neq i:a_j = 1\}$ be other agents' action. If agent $i$'s action does not affect whether the project succeeds,
$U_i(S_{-i}\cup\{i\})-U_i(S_{-i}) \le -e_i+m_i e_i<0$.
That is, agent $i$ will invest only if her investment can change the project from unsuccessful to successful, $e(S_{-i})<\tau\le  e(S_{-i})+e_i$.
However, the following example shows this is just a necessary condition.
\begin{example}\label{ex:big}
Let $\vec{e}=(2,2,2,9)$, $\vec{m}=(0.2,0.2,0.2,0.5)$, and $\tau = 10$.  The project's success requires agent $4$ to participate.  However, agent $4$ will not participate, because even if everyone participates, the total investment is $15$ and agent $4$'s reward is at most $15m_4 = 7.5$ which is smaller than her endowment $9$.  
\end{example}
The above example shows that even with enough endowments the agents cannot cooperate due to their heterogeneity.

The following proposition shows that the best response depends on not only whether the project succeeds but also how large her endowment $e_i$ and reward level $m_i$ are. 
\begin{proposition}\label{prop:conditional}
Given the other agents' actions $S_{-i}$, agent $i$'s best response is participating if and only if 
\begin{align}
    &e(S_{-i})<\tau\le  e(S_{-i})+e_i, \text{ and}\label{eq:condition1}\\
    &\frac{1-m_i}{m_i}e_i\le e(S_{-i}).\label{eq:condition2}
\end{align}
\end{proposition}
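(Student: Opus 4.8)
The plan is to compare agent $i$'s utility under her two available actions, participate versus free ride, and show that participating is (weakly) optimal exactly when both \eqref{eq:condition1} and \eqref{eq:condition2} hold. First I would instantiate the utility \eqref{eqn:utility} for each action against the fixed profile $S_{-i}$. Free riding gives $U_i(S_{-i}) = e_i + m_i e(S_{-i})\mathbf{1}[e(S_{-i})\ge\tau]$, since she keeps her endowment and collects a reward only if the others already fund the project; participating gives $U_i(S_{-i}\cup\{i\}) = m_i\bigl(e(S_{-i})+e_i\bigr)\mathbf{1}[e(S_{-i})+e_i\ge\tau]$, since she forfeits her endowment but raises the total contribution by $e_i$. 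Writing $\Delta := U_i(S_{-i}\cup\{i\}) - U_i(S_{-i})$, participating is a best response precisely when $\Delta \ge 0$, so the whole statement reduces to characterizing the sign of $\Delta$.

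The heart of the argument is a case analysis on which of the two quantities $e(S_{-i})$ and $e(S_{-i})+e_i$ reach $\tau$, since this is what determines which indicators fire. There are three cases. (i) If $e(S_{-i})+e_i<\tau$ the project fails under either action, so $\Delta = -e_i \le 0$. (ii) If $e(S_{-i})\ge\tau$ the project already succeeds without her, both indicators fire, and $\Delta = m_i e_i - e_i = -(1-m_i)e_i \le 0$ because $m_i<1$. (iii) The remaining case $e(S_{-i})<\tau\le e(S_{-i})+e_i$ is exactly condition \eqref{eq:condition1}; here only the participation indicator fires, so $\Delta = m_i\bigl(e(S_{-i})+e_i\bigr) - e_i = m_i e(S_{-i}) - (1-m_i)e_i$, and dividing by $m_i>0$ shows $\Delta\ge 0$ is equivalent to $\frac{1-m_i}{m_i}e_i \le e(S_{-i})$, which is exactly condition \eqref{eq:condition2}.

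Combining the cases yields the claim: in cases (i) and (ii) we have $\Delta\le 0$ (strictly negative whenever $e_i>0$), so participation is not a best response there, whereas case (iii) is precisely \eqref{eq:condition1} and within it $\Delta\ge 0$ holds precisely under \eqref{eq:condition2}. The only point requiring care is the degenerate tie at $e_i=0$, where the agent is payoff-irrelevant and indifferent in every case, making $\Delta=0$ in (i) and (ii) as well. I would dispose of this by observing that \eqref{eq:condition1} forces $e_i \ge \tau - e(S_{-i}) > 0$, so on the region relevant to the statement the characterization is clean (equivalently, one may simply assume $e_i>0$ throughout). I expect this boundary bookkeeping, rather than any of the inequalities, to be the only mildly subtle part; the three-way split itself is mechanical once the two utilities are written out.
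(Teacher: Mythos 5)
Your proof is correct and takes essentially the same approach as the paper's: a direct comparison of $U_i(S_{-i}\cup\{i\})$ and $U_i(S_{-i})$ with a case analysis on which of the success indicators fire, reducing condition \eqref{eq:condition2} to the sign of the utility difference inside the regime of condition \eqref{eq:condition1}. Your explicit treatment of the degenerate case $e_i=0$ (where the agent is indifferent and the ``only if'' direction needs $e_i>0$ or a strict-preference reading) is actually a point the paper's proof glosses over when it asserts the strict inequality $-e_i+m_ie_i<0$.
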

\begin{proof}
If both \eqref{eq:condition1} and \eqref{eq:condition2} are true,
\begin{align*}
    &U_i(S_{-i}\cup\{i\})\\
    =& m_ie(S_{-i}\cup\{i\})\mathbf{1}[e(S_{-i}\cup\{i\})\geq\tau]\tag{by Eqn.~\eqref{eqn:utility}}\\
    =& m_i e(S_{-i}\cup\{i\})\tag{by the second part of Eqn.~\eqref{eq:condition1}}\\
    =& m_i e(S_{-i}) +m_i e_i\\
    \ge& (1-m_i)e_i +m_i e_i\tag{by Eqn.~\eqref{eq:condition2}}\\
    =& U_i(S_{-i})\tag{by Eqn.~\eqref{eq:condition1}}
\end{align*}
Therefore, agent $i$'s best response is participating.

Conversely, suppose Eqn.~\eqref{eq:condition1} does not hold.  
Agent $i$'s action does not affect whether the project succeeds.
By Eqn.~\eqref{eqn:utility}, $U_i(S_{-i}\cup\{i\})-U_i(S_{-i}) \le -e_i+m_i e_i<0$.  Thus, agent $i$'s best response is not participating.  Finally, if Eqn.~\eqref{eq:condition2} does not hold, 
\begin{align*}
    U_i(S_{-i}\cup\{i\})\le& m_i e(S_{-i})+m_i e_i\tag{may not succeed}\\
    <& (1-m_i)e_i+m_i e_i\tag{Eqn.~\eqref{eq:condition2} is false}\\
    =& U_i(S_{-i}), 
\end{align*}
so agent $i$'s best response is not participating.  This completes the proof.
\end{proof}

With Proposition~\ref{prop:conditional}, we can show that an agent will not participate if her endowment is very large.
\begin{corollary}\label{cor:big}
If  $e_i\ge\tau$, agent $i$ will not participate.
\end{corollary}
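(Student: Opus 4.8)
The plan is to invoke Proposition~\ref{prop:conditional}, which fully characterizes when participation is a best response. To show agent $i$ never participates, it suffices to show that the characterizing conditions~\eqref{eq:condition1} and~\eqref{eq:condition2} cannot both hold whenever $e_i \ge \tau$. Since Proposition~\ref{prop:conditional} gives an ``if and only if,'' the negation of either condition already guarantees that not participating is the best response, so I only need to defeat one of the two.

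First I would focus on condition~\eqref{eq:condition1}, specifically its first inequality $e(S_{-i}) < \tau$. The intuition is that if agent $i$'s own endowment is at least the threshold, then the project can succeed without her, so she has no pivotal role and will free ride. Concretely, I would argue as follows: participation requires $e(S_{-i}) < \tau$ by~\eqref{eq:condition1}. The second condition~\eqref{eq:condition2} states $\frac{1-m_i}{m_i} e_i \le e(S_{-i})$. Chaining these, I would combine the hypothesis $e_i \ge \tau$ with~\eqref{eq:condition2} to derive a contradiction with $e(S_{-i}) < \tau$.

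The key computation is the following chain of inequalities. Since $0 < m_i < 1$, the coefficient $\frac{1-m_i}{m_i}$ is strictly positive, and I can write
\begin{align*}
    e(S_{-i}) \;\ge\; \frac{1-m_i}{m_i} e_i \;\ge\; \frac{1-m_i}{m_i}\,\tau,
\end{align*}
using~\eqref{eq:condition2} first and then the hypothesis $e_i \ge \tau$. The difficulty is that this lower bound on $e(S_{-i})$ need not immediately exceed $\tau$ unless I control $m_i$: when $m_i \le 1/2$ the factor $\frac{1-m_i}{m_i} \ge 1$ and I get $e(S_{-i}) \ge \tau$, directly contradicting $e(S_{-i}) < \tau$. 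So the clean regime is $m_i \le 1/2$.

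The main obstacle is therefore the regime $m_i > 1/2$, where $\frac{1-m_i}{m_i} < 1$ and the above chain does not force $e(S_{-i}) \ge \tau$ by itself. To handle this case I would instead return to the second part of~\eqref{eq:condition1}, namely $\tau \le e(S_{-i}) + e_i$, and combine it with the definition of utility directly rather than relying solely on the two displayed conditions. Alternatively, and more cleanly, I suspect the intended argument simply compares agent $i$'s potential reward against her endowment: even in the best case where the project succeeds, her reward is $m_i e(S)$, and I would want to show this cannot compensate for sacrificing $e_i \ge \tau$. I expect the cleanest route is to show that under $e_i \ge \tau$, any successful profile containing $i$ already succeeds without $i$ (because $e_i$ alone meets the threshold only forces matters when $i$ is the sole large contributor), so pivotality~\eqref{eq:condition1} fails; I would verify this edge reasoning carefully, as it is the one spot where the argument could slip.
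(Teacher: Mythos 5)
Your approach has a genuine gap, and it is not fixable within the framework you chose. You attempt to prove the corollary by showing that conditions~\eqref{eq:condition1} and~\eqref{eq:condition2} of Proposition~\ref{prop:conditional} can never hold simultaneously when $e_i \ge \tau$, i.e., that participation is never a best response to \emph{any} profile $S_{-i}$. That statement is false. Take $\tau = 10$, $e_i = 10$, $m_i = 0.9$, and $e(S_{-i}) = 5$: then $e(S_{-i}) = 5 < 10 \le 15 = e(S_{-i}) + e_i$, so \eqref{eq:condition1} holds, and $\frac{1-m_i}{m_i} e_i = \frac{10}{9} \le 5 = e(S_{-i})$, so \eqref{eq:condition2} holds as well; by the proposition, agent $i$'s best response is to participate. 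This is exactly the regime $m_i > 1/2$ that you flagged as the ``main obstacle'' --- the obstacle is real, and no chain of inequalities will remove it. Your fallback claim, that under $e_i \ge \tau$ any successful profile containing $i$ already succeeds without $i$, is also false: with $S_{-i} = \emptyset$ the project succeeds with $i$ (since $e_i \ge \tau$) but not without her, so pivotality in \eqref{eq:condition1} does hold.

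The missing idea is that the corollary is a statement about Nash equilibria, not about best responses to arbitrary opponent profiles, and the paper proves it with a two-step equilibrium argument. Suppose $S$ is a Nash equilibrium with $i \in S$ and $e_i \ge \tau$. Then the project succeeds no matter what the other agents do, so every agent $j \ne i$ is non-pivotal and her unique best response is to free ride; in equilibrium this forces $S = \{i\}$. But then $e(S_{-i}) = 0$ and condition \eqref{eq:condition2} fails, since $\frac{1-m_i}{m_i} e_i > 0$ because $m_i < 1$; equivalently, $U_i(\{i\}) = m_i e_i < e_i = U_i(\emptyset)$, so agent $i$ would deviate --- a contradiction. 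The profiles $S_{-i}$ against which your problematic-regime agent would want to participate simply cannot arise in an equilibrium containing $i$; that is the step a best-response-only analysis cannot capture.
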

\begin{proof}
Suppose $e_i\ge \tau$ and agent $i$ participates.  Other agent will free ride because the project already succeeds.  However, agent $i$ does not want to participate by himself/herself, because $m_i<1$.
\end{proof}

\subsection{Nash Equilibria}
By Proposition~\ref{prop:conditional}, we have the following characterization of a successful cooperation $S$ which is a Nash equilibrium that makes the project successful.
\begin{proposition}\label{prop:NE}
Given a pivotal game $(\vec{e}, \vec{m}, \tau)$, $S$ is a cooperative Nash equilibrium if and only if for all $i\in S$
\begin{equation}\label{eq:NE}
    \max\left\{\tau, \frac{e_i}{m_i}\right\}\le e(S)< \tau +e_i.
\end{equation}
\end{proposition}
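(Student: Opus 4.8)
The plan is to derive the characterization in \eqref{eq:NE} directly from Proposition~\ref{prop:conditional}, which already tells us exactly when a single agent's best response is to participate. The definition of a cooperative Nash equilibrium has two parts: the project succeeds, $e(S)\ge\tau$, and no agent wants to deviate unilaterally. I would handle these two parts separately and show that together they are equivalent to requiring \eqref{eq:NE} for every $i\in S$.

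First I would observe that a cooperative Nash equilibrium requires $e(S)\ge\tau$ by definition, so the left inequality $\tau\le e(S)$ in \eqref{eq:NE} is immediate and, conversely, is part of what \eqref{eq:NE} asserts. The substantive work is translating the no-deviation conditions. For an agent $i\in S$, the relevant quantity is $S_{-i}=S\setminus\{i\}$, so that $e(S_{-i})=e(S)-e_i$. Agent $i$ is playing a best response (participating) precisely when Proposition~\ref{prop:conditional} applies with this $S_{-i}$. I would substitute $e(S_{-i})=e(S)-e_i$ into \eqref{eq:condition1} and \eqref{eq:condition2}: condition \eqref{eq:condition1} becomes $e(S)-e_i<\tau\le e(S)$, whose right half is just $e(S)\ge\tau$ (already guaranteed by success) and whose left half rearranges to $e(S)<\tau+e_i$, giving the right inequality of \eqref{eq:NE}. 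Condition \eqref{eq:condition2} becomes $\frac{1-m_i}{m_i}e_i\le e(S)-e_i$, i.e. $\frac{e_i}{m_i}-e_i\le e(S)-e_i$, which simplifies to $\frac{e_i}{m_i}\le e(S)$, the remaining term inside the max in \eqref{eq:NE}.

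Combining these, the two conditions of Proposition~\ref{prop:conditional} for each participating agent $i$, together with the success requirement $e(S)\ge\tau$, are exactly $\max\{\tau,\,e_i/m_i\}\le e(S)<\tau+e_i$. I would also note that I only need to check the best-response condition for agents $i\in S$ and not for $i\notin S$: by the discussion preceding Proposition~\ref{prop:conditional}, once the project succeeds, any non-participating agent $j\notin S$ already sees $e(S)\ge\tau$, so her investment cannot change success status and her best response is indeed to free ride. Hence the non-participants automatically satisfy their equilibrium condition, and the characterization reduces to the stated inequality over $S$.

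I do not expect a genuine obstacle here, since the result is essentially a restatement of Proposition~\ref{prop:conditional} with the bookkeeping substitution $e(S_{-i})=e(S)-e_i$. The one point requiring mild care is the argument that non-participants need not be checked separately; I would make explicit that success of the project ($e(S)\ge\tau$) is what pins down their free-riding incentive, so the equivalence is truly two-directional. The only other subtlety is confirming that the $\tau\le e(S)$ clause and the $e(S)\ge\tau$ half of \eqref{eq:condition1} are consistent rather than redundant-and-lost, which the substitution makes transparent.
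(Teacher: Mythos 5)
Your proof is correct and is precisely the argument the paper intends: the paper states Proposition~\ref{prop:NE} as a direct consequence of Proposition~\ref{prop:conditional} without writing out the details, and your derivation—substituting $e(S_{-i})=e(S)-e_i$ into \eqref{eq:condition1} and \eqref{eq:condition2} to get $\max\{\tau,e_i/m_i\}\le e(S)<\tau+e_i$, plus the observation that non-participants automatically prefer to free ride once $e(S)\ge\tau$—is exactly that omitted derivation. No gaps.
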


\Cref{fig:my_label} shows that social assortment can emerge when agents' endowments are heterogeneous. Agents with large endowments do not cooperate with small endowment agents and form a Nash equilibrium.  

\begin{figure}
    \centering
    \includegraphics[width = 0.65\textwidth]{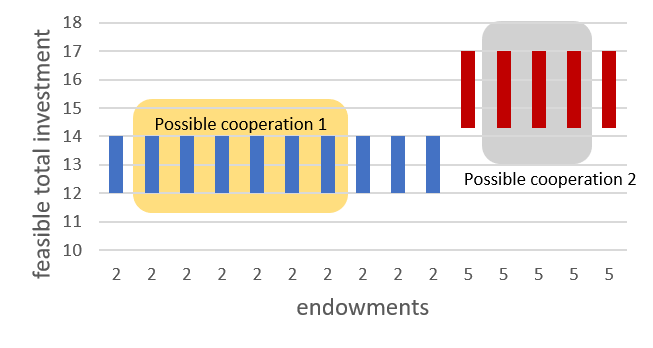}
    \caption{Let $\tau = 12$.  There are $15$ agents whose reward level is $0.35$. $10$ agents have endowment $2$ and the other agents have endowment $5$.  With \Cref{prop:NE}, for an agent with endowment $2$ to invest, the sum of investment need to be between $\max\{12, 2/0.35\} = 12$ and $12+2 = 14$ which is represented as a blue bar in the figure.  However, when the endowment is $5$, the total investment needs to be between $14.3$ and $17$  which is represented as a red bar.\\
    There are several possible cooperation.  Any collection of $6$ agents with endowment $2$ is one (yellow region), and any collection of $3$ agents with endowment $5$ is another (grey region). 
    However, by Proposition~\ref{prop:NE}, there does not exist a cooperative Nash equilibrium where both agents with endowments $2$ and $5$ invest.}
    \label{fig:my_label}
\end{figure}

Further, we show that deciding if there is a cooperative Nash equilibrium is NP-complete.  
This implies that searching such an cooperation among a large group of agents is computationally difficult.

\begin{theorem}\label{thm:npc}
Deciding if a pivotal participation game  $(\vec{e}, \vec{m}, \tau)$ has a cooperative Nash equilibrium is NP-complete.
\end{theorem}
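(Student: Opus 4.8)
The plan is to establish both directions of NP-completeness: membership in NP and NP-hardness.

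For membership, I would invoke the characterization in Proposition~\ref{prop:NE}. A candidate strategy profile $S$ serves as the certificate; given it, one computes $e(S)$ and checks \eqref{eq:NE}, namely $\max\{\tau, e_i/m_i\}\le e(S)<\tau+e_i$, for each $i\in S$. This runs in linear time, so the problem lies in NP.

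For hardness, I would reduce from Subset Sum (equivalently Partition), creating one ``item agent'' per input number and choosing the endowments $\vec e$, the reward levels $\vec m$, and the threshold $\tau$ to encode the instance. The driving identity is that, by Proposition~\ref{prop:NE}, every cooperative Nash equilibrium $S$ has its total contribution trapped in
\[
e(S)\in\Big[\max_{i\in S}\tfrac{e_i}{m_i},\ \tau+\min_{i\in S}e_i\Big),
\qquad e(S)\ge\tau .
\]
The idea is to pick the reward levels $m_i$ so that the lower endpoint $\max_{i\in S} e_i/m_i$ is pinned exactly at the target, and to add gadget agent(s) of tiny endowment so that the upper endpoint $\tau+\min_{i\in S}e_i$ falls just above the target; then the only attainable total inside the window is the target itself, which occurs precisely when the selected endowments realize a subset summing to the target. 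Exploiting that endowments are real-valued, I would use two well-separated scales---a coarse scale carrying the subset-sum values and a fine scale for the gadget---so that ``which numbers are chosen'' and ``forcing the total'' are controlled independently, and I would use Corollary~\ref{cor:big} to keep over-large agents from participating and being absorbed into spurious solutions.

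I would then verify the two directions. Completeness: from a subset summing to the target I build the corresponding profile and confirm via \eqref{eq:NE} that it is a cooperative Nash equilibrium. Soundness: from any cooperative Nash equilibrium I read off a subset and argue, using the trapping interval above, that its endowments sum to exactly the target. The hard part will be soundness---ruling out spurious equilibria whose total overshoots the target. The obstacle is intrinsic: the width of the pivotality window is $\tau+\min_{i\in S}e_i-\tau=\min_{i\in S}e_i$, at least the smallest selected endowment and dependent on $S$, so it cannot be shrunk below the subset-sum grid by scaling alone. This is exactly why the reduction must \emph{force} a tiny gadget agent into every cooperative Nash equilibrium, making $\min_{i\in S}e_i$ small and the window narrower than the gap between attainable totals; designing the gadget and $\tau$ so that no equilibrium can evade it is where the bulk of the argument lies.
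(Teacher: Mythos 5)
Your overall architecture matches the paper's: NP membership via Proposition~\ref{prop:NE} with $S$ as certificate, a reduction from Partition, reward levels chosen so that $e_i/m_i$ pins a common lower bound on $e(S)$, and small ``gadget'' agents whose presence makes the window $\left[\max_{i\in S} e_i/m_i,\ \tau+\min_{i\in S}e_i\right)$ so narrow that only the target total fits. But there is a genuine gap, and you name it yourself: you never write down the instance and never prove soundness, deferring exactly the step ``designing the gadget and $\tau$ so that no equilibrium can evade it'' --- which is the entire content of the hardness proof. A plan that stops where, in your own words, ``the bulk of the argument lies'' is not a proof of the theorem.

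The missing mechanism, which the paper supplies, is a large \emph{common additive offset} in every endowment, not merely ``two well-separated scales.'' The paper sets $M=100\sum_i c_i$, $N=100MT$, item agents $e_i=N+M+2c_i$, gadget agents $e_{2T+1}=N+1$ and $e_{2T+2}=N$, $\tau=1+e_{2T+1}+\frac12\sum_{i\le 2T}e_i$, and $m_i=e_i/(\tau+e_{2T+2}-1)$ for all $i$, so that every agent shares the identical lower bound $\tau+e_{2T+2}-1$. Because the offset $N$ dwarfs everything else, all endowments are nearly equal, and that is what makes the soundness case analysis work by \emph{counting}: if the gadget agent $2T+2$ is not in $S$, then $|S|\le T+1$ forces $e(S)$ strictly below the uniform lower bound, while $|S|\ge T+2$ forces $e(S)>\tau+e_i$ for some $i\in S$, violating the right half of Eqn.~\eqref{eq:NE}; either way $S$ is not an equilibrium. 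So the gadget cannot be evaded --- not because the window is narrow, but because any coalition omitting it is eliminated by cardinality arguments. Once $2T+2\in S$, integrality of all endowments pins $e(S)=\tau+e_{2T+2}-1$ exactly (a width-one window), a parity trick (item endowments even, $e_{2T+1}$ odd) forces $2T+1\in S$, and one more cardinality count forces exactly $T$ item agents, recovering the partition. None of these devices (common offset, cardinality counting, parity, integer width-one window) appear in your proposal; your appeal to Corollary~\ref{cor:big} is also idle, since no agent in such a construction has $e_i\ge\tau$. Without them, the direction you correctly flag as the hard part does not go through.
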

\begin{proof}
This decision problem is clearly in NP: the set of agents $S\neq\emptyset$ that invest can be served as a certificate, and Proposition~\ref{prop:NE} can decide if $S$ is a Nash equilibrium in polynomial time.
It remains to show that the problem is NP-hard.

We reduce the problem from a well-known NP-complete problem, the partition problem \citep{garey1979computers}: given a set $\mathcal{S}$ of $2T$ positive integers $c_1,\ldots,c_{2T}$, decide if there exists a subset $\mathcal{T}$ of $T$ integers whose sum is exactly $\frac12\sum_{i=1}^{2T}c_i$.
We assume without loss of generality that $c_1\leq\cdots\leq c_{2T}$.

We construct a pivotal participation game instance $(\vec{e}, \vec{m}, \tau)$ as follows.
Let $M=100\sum_{i=1}^{2T}c_i$ and $N=100MT$.
The instance contains $n=2T+2$ agents, with $e_i=N+M+2c_i$ for $i=1,\ldots,2T$, and $e_{2T+1}=N+1$ and $e_{2T+2}=N$.
It is straightforward to see that $e_{2T+2}<e_{2T+1}<e_1\leq\cdots\leq e_{2T}$.
The parameter $\tau$ is set to $\tau=1+e_{2T+1}+\frac12\sum_{i=1}^{2T}e_i=\frac12\sum_{i=1}^{2T}2c_i+(N+M)T+N+2$.
Finally, $\vec{m}$ is set to the followings: for each $i=1,\ldots,n$, $m_i=\frac{e_i}{\tau+e_{2T+2}-1}$.
For each $i=1,\ldots,n$, we have $\max\{\frac{e_i}{m_i},\tau\}=\tau+e_{2T+2}-1=\frac12\sum_{i=1}^{2T}2c_i+(N+M)T+2N+1$, which must be a lower bound to $e(S)$ by Proposition~\ref{prop:NE}.

Suppose the partition instance is a YES instance, and let $\mathcal{T}$ be the subset of $T$ integers such that $\sum_{i\in\mathcal{T}}c_i=\frac12\sum_{i=1}^{2T}c_i$.
By slightly abuse of the notation, let $\mathcal{T}$ be the subset of the first $2T$ agents that corresponds to the set of $T$ selected integers.
We show that $S=\mathcal{T}\cup\{2T+1,2T+2\}$ is a valid Nash equilibrium.
To see this, the total investment is $e(S)=2\cdot\left(\sum_{i\in\mathcal{T}}c_i\right)+(M+N)T+2N+1=2\cdot\left(\frac12\sum_{i=1}^{2T}c_i\right)+(M+N)T+2N+1=\tau+e_{2T+2}-1=\max\{\tau,\frac{e_i}{m_i}\}$ for each $i=1,\ldots,n$.
Moreover, for each $i=1,\ldots,n$, by noticing that $e_{2T+2}\leq e_i$, we have $e(S)\leq\tau+e_i-1<\tau+e_i$.
By Proposition~\ref{prop:NE}, $S$ is a Nash equilibrium.

Suppose the partition instance is a NO instance.
We aim to show that the only Nash equilibrium is $S=\emptyset$.
To derive a contradiction, suppose $S\neq\emptyset$ is a Nash equilibrium, and we consider two cases: $2T+2\in S$ and $2T+2\notin S$.

Suppose $2T+2\in S$. By Proposition~\ref{prop:NE}, we must have $\frac{e_{2T+2}}{m_{2T+2}}=\tau+e_{2T+2}-1\leq e(S)<\tau + e_{2T+2}$.
Since all $e_i$'s are integers, this implies $e(S)=\tau+e_{2T+2}-1$, and it must be that $\sum_{i\in S\setminus\{2T+2\}}e_i=\tau-1=e_{2T+1}+\frac12\sum_{i=1}^{2T}e_i$.
By our construction, each of $e_1,\ldots,e_{2T}$ is an even number, while $e_{2T+1}$ is an odd number.
Thus, $e_{2T+1}+\frac12\sum_{i=1}^{2T}e_i$ is an odd number (note that $\sum_{i=1}^{2T}e_i$ is a multiple of $4$).
By the parity of the equation, we must have $2T+1\in S$.
This implies $\sum_{i\in S\setminus\{2T+1,2T+2\}}e_i=\frac12\sum_{i=1}^{2T}e_i$.
Let $\mathcal{T}=S\setminus\{2T+1,2T+2\}$.
We have $\sum_{i\in \mathcal{T}}e_i=\frac12\sum_{i=1}^{2T}e_i$.

Next, it is easy to see that $|\mathcal{T}|=T$.
If $|\mathcal{T}|<T$, $\sum_{i\in \mathcal{T}}e_i\leq(M+N)(T-1)+\sum_{i\in\mathcal{T}}2c_i<(M+N)(T-1)+\frac1{50}M<(M+N)T<\frac12\sum_{i=1}^{2T}e_i$.
If $|\mathcal{T}|>T$, $\sum_{i\in \mathcal{T}}e_i>(M+N)(T+1)>(M+N)T+\frac1{50}M>\frac12\sum_{i=1}^{2T}e_i$.

Knowing that $|\mathcal{T}|=T$, by canceling the $(M+N)$ terms on both sides of the equation $\sum_{i\in \mathcal{T}}e_i=\frac12\sum_{i=1}^{2T}e_i$ and then divide the both sides by $2$, we have $\sum_{i\in\mathcal{T}}c_i=\frac12\sum_{i=1}^{2T}c_i$, which contradicts to that the partition instance is a NO instance.

It remains to derive a contradiction for the case $2T+2\notin S$.
Firstly, we show that $|S|\geq T+2$.
If $|S|\leq T+1$, we have 
$e(S)=\sum_{i\in S}e_i<(N+M)(T+1)+\frac1{50}M+1<(M+N)T+1.1N<\tau+e_{2T+2}-1=\frac{e_i}{m_i}$, which implies $S$ cannot be a Nash equilibrium by Proposition~\ref{prop:NE}.

Having shown that $|S|\geq T+2$, we discuss two sub-cases: $2T+1\notin S$ and $2T+1\in S$.
For the first sub-case $2T+1\notin S$, since $e_i>M+N$ for all $i=1,\ldots,2T$, we have $e(S)=\sum_{i\in S}e_i>(T+2)(M+N)$.
On the other hand, for each $i=1,\ldots,2T$, we have $\tau+e_i=\frac12\sum_{i=1}^{2T}2c_i+(M+N)T+N+2+(M+N)+c_i<(M+N)(T+1)+N+\frac1{25}M+2<(M+N)(T+2)$.
Therefore, $e(S)>\tau+e_i$ for each $i=1,\ldots,2T$.
Since $S\subseteq\{1,\ldots,2T\}$, $S$ cannot be a Nash equilibrium by Proposition~\ref{prop:NE}.

Consider the second sub-case $2T+1\in S$.
There are at least $T+1$ agents in $\{1,\ldots,2T\}$, so $e(S)\geq(M+N)(T+1)+N+1=(M+N)T+2N+M+1$.
On the other hand, $\tau+e_{2T+1}=\frac12\sum_{i=1}^{2T}2c_i+(M+N)T+2N+3<(M+N)T+2N+M<e(S)$.
By Proposition~\ref{prop:NE}, $S$ cannot be a Nash equilibrium, and we derive a contradiction again.

We have derive contradictions for all possible cases for $S\neq\emptyset$.
Therefore, the only Nash equilibrium is $S=\emptyset$ when the partition instance is a NO instance.
We conclude the theorem.
\end{proof}

Although deciding the existence of cooperative equilibria is NP-complete, we will see two special scenarios where this decision problem is tractable.

\subsubsection{Low Endowment-to-Reward Ratio}\label{sec:ratio}
In the following theorem, we show that a cooperative Nash equilibrium always exists and can be computed efficiently when the ratio of endowment to reward level is smaller than $\tau$ for each agents (as long as the sum of all agents' endowments exceeds $\tau$). Intuitively, our result shows that cooperation becomes easy, if agents' endowments are small or the reward level are large enough, in contrast to Example~\ref{ex:big}.

\begin{theorem}\label{thm:greedy}
Given a pivotal participation game $(\vec{e}, \vec{m}, \tau)$, if $e_i/m_i\le \tau$ for all $i$, there exists a successful cooperation if and only if $\sum_{i=1}^n e_i\ge \tau$.
Moreover, we can find a cooperative Nash equilibrium $S$ in polynomial time.
\end{theorem}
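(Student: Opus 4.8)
The plan is to reduce the statement to the Nash-equilibrium characterization in Proposition~\ref{prop:NE} and then exhibit a simple greedy construction. The ``only if'' direction is immediate: any successful cooperation $S$ satisfies $e(S)\ge\tau$, and since $e(S)\le\sum_{i=1}^n e_i$, the inequality $\sum_{i=1}^n e_i\ge\tau$ follows without even using the hypothesis on $e_i/m_i$. The content is therefore entirely in the ``if'' direction together with the polynomial-time claim.

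The key observation is that the hypothesis $e_i/m_i\le\tau$ collapses the lower bound in Proposition~\ref{prop:NE}. Indeed, $\max\{\tau,e_i/m_i\}=\tau$ for every $i$, so a set $S$ is a cooperative Nash equilibrium precisely when $\tau\le e(S)<\tau+e_i$ for all $i\in S$. The upper bound $e(S)<\tau+e_i$ is equivalent to $e(S)-e_i<\tau$, and since this must hold for every $i\in S$, the binding case is the agent with the smallest endowment in $S$. Thus the task becomes: find a set $S$ with $\tau\le e(S)$ and $e(S)-\min_{i\in S}e_i<\tau$, i.e.\ a set whose total endowment reaches the threshold but drops below it once any single (hence the smallest) contributor is removed.

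The greedy algorithm I would use is to sort the agents by \emph{decreasing} endowment and add them one at a time until the running total first reaches $\tau$; call this set $S$ and let $k$ be the last agent added. Because the full sum is at least $\tau$ by hypothesis, this crossing point exists, so $S$ is well defined. By minimality of $k$, the total just before adding $k$ is below $\tau$, i.e.\ $e(S)-e_k<\tau$; and because agents were added in decreasing order, $e_k=\min_{i\in S}e_i$, so $e(S)-e_i\le e(S)-e_k<\tau$ for every $i\in S$. Together with $e(S)\ge\tau$, this is exactly the condition $\tau\le e(S)<\tau+e_i$, so $S$ is a cooperative Nash equilibrium by Proposition~\ref{prop:NE}. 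Sorting and prefix-summing run in polynomial time, giving the algorithmic claim.

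The main thing to get right---and the subtlety I would flag---is the choice of sorting direction: adding agents in \emph{increasing} order of endowment would guarantee $e(S)-e_k<\tau$ for the last (now largest) agent, but not for the smallest one, so the upper-bound constraint could fail. Sorting in decreasing order is what forces the last-added agent to be the minimizer and makes the single binding inequality the easy one. I would also note the harmless edge case that the crossing agent must have strictly positive endowment (otherwise the total would not have increased when it was added), so $S$ is nonempty whenever $\tau>0$.
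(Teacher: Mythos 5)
Your proof is correct, and it follows the same blueprint as the paper's: collapse the lower bound in Proposition~\ref{prop:NE} to $\tau$ using the hypothesis $e_i/m_i\le\tau$, then greedily add agents until the threshold is first crossed, and verify Eqn.~\eqref{eq:NE} directly. The one substantive difference is in your favor. The paper routes the argument through \emph{minimal} sets (Definition~\ref{def:minimal} and Proposition~\ref{prop:greedy}) and constructs one by adding agents in an \emph{arbitrary} order until $e(S)\ge\tau$, asserting it is ``easy to see'' the result is minimal. That assertion is false in general, and the subtlety you flagged about sorting direction is exactly the issue: with endowments $(5,1,10)$, $\tau=11$, and reward levels satisfying the hypothesis (e.g., $m_i$ close to $1$), adding agents in that order stops at $e(S)=16$, yet removing the endowment-$1$ agent leaves $15\ge\tau$, so $S$ is not minimal; indeed $e(S)=16\ge\tau+1$ violates Eqn.~\eqref{eq:NE} for that agent, so the constructed set is not even a Nash equilibrium. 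Your decreasing-endowment order repairs this: the last-added agent is then the minimum-endowment member of $S$, so removing any agent drops the total below $\tau$, which is precisely minimality, and hence (by the paper's Proposition~\ref{prop:greedy}, or by your direct verification) $S$ is a cooperative Nash equilibrium. The paper's abstraction of minimal sets does buy something---it is reused in the balanced-instances section---and its construction could alternatively be repaired by a post-processing pass that discards redundant agents; but as written, your version closes a real gap in the paper's argument rather than merely restating it.
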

We will show that every minimal subset of agents $S$ (defined below) is a Nash equilibrium.

\begin{definition}\label{def:minimal}
Given a pivotal participation game $(\vec{e}, \vec{m}, \tau)$, a subset of agents $S$ is \emph{minimal} if
\begin{enumerate}
    \item $e(S)\geq \tau$, and
    \item for any $j\in S$, $e(S\setminus\{j\}) < \tau$.
\end{enumerate}
\end{definition}

\begin{proposition}\label{prop:greedy}
Given a pivotal participation game $(\vec{e}, \vec{m}, \tau)$, if $e_i/m_i\le\tau$ for all $i$, every minimal subset of agents $S$ is a Nash equilibrium.
\end{proposition}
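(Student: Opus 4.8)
The plan is to verify directly that any minimal subset $S$ satisfies the characterization of a cooperative Nash equilibrium from Proposition~\ref{prop:NE}; since a minimal set already has $e(S)\ge\tau$, the project succeeds, so establishing that characterization immediately yields that $S$ is a (cooperative) Nash equilibrium, which is exactly what is claimed. Concretely, Proposition~\ref{prop:NE} asks that for every $i\in S$,
\[
\max\left\{\tau,\ \frac{e_i}{m_i}\right\}\le e(S)<\tau+e_i,
\]
so it suffices to produce both the lower and the upper bound on $e(S)$ for each participating agent $i$.

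The two bounds follow immediately from the two defining conditions of minimality in Definition~\ref{def:minimal}. For the upper bound, fix $i\in S$ and apply the second minimality condition with $j=i$: this gives $e(S\setminus\{i\})<\tau$, which rearranges to $e(S)<\tau+e_i$. For the lower bound, the first minimality condition gives $e(S)\ge\tau$; combining this with the standing hypothesis $e_i/m_i\le\tau$ yields $\frac{e_i}{m_i}\le\tau\le e(S)$, so that the maximum collapses to $\tau$ and $\max\{\tau,\,e_i/m_i\}=\tau\le e(S)$. Both inequalities of Proposition~\ref{prop:NE} therefore hold for every $i\in S$, completing the verification.

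I do not anticipate a genuine obstacle: the argument is purely a matching of the two inequalities defining a cooperative equilibrium against the two conditions defining minimality, with the endowment-to-reward hypothesis used only to reduce the maximum to $\tau$. The lone subtlety to flag is the degenerate regime $\tau=0$, where $S=\emptyset$ is vacuously minimal and the characterization of Proposition~\ref{prop:NE} holds trivially; for $\tau>0$ every minimal $S$ is nonempty, so the verification above is the relevant one.
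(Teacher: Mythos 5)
Your proposal is correct and follows essentially the same route as the paper: both verify the two inequalities of Proposition~\ref{prop:NE} for each $i\in S$, getting the lower bound from minimality condition~1 together with the hypothesis $e_i/m_i\le\tau$, and the upper bound from minimality condition~2 (the paper phrases this step as a contradiction, you state it directly as $e(S\setminus\{i\})<\tau \Leftrightarrow e(S)<\tau+e_i$, which is immaterial). Your extra remark on the degenerate case $\tau=0$ is a harmless addition not present in the paper.
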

\begin{proof}
We prove this by showing that every minimal subset of agents $S$ satisfies Eqn.~(\ref{eq:NE}) in Proposition~\ref{prop:NE}.
Given an arbitrarily minimal subset of agents $S$, since $e_i/\tau\leq m_i$, the left part of (\ref{eq:NE}) holds due to 1 of Definition~\ref{def:minimal}.
Suppose the right part of (\ref{eq:NE}) fails for some agent $j\in S$.
We have $e(S)\geq\tau+e_j$, which implies $e(S\setminus\{j\})\geq\tau$, which contradicts to 2 of Definition~\ref{def:minimal}.
Thus, the right part of (\ref{eq:NE}) also holds for any $i\in S$.
\end{proof}

Finally, Theorem~\ref{thm:greedy} can be shown by constructing a minimal subset of agents $S$.
\begin{proof}[Proof of Theorem~\ref{thm:greedy}]
Suppose $\sum_{i=1}^ne_i\geq\tau$.
The following simple algorithm finds a minimal subset of agents $S$: initialize $S=\emptyset$; iteratively add an arbitrary agent $i$ to $S$, and terminate the algorithm when $e(S)\geq\tau$.
The algorithm will terminates with $e(S)\geq\tau$ since $\sum_{i=1}^ne_i\geq\tau$.
It is easy to see that the algorithm finds a minimal subset of agents $S$, and by Proposition~\ref{prop:greedy}, $S$ is a Nash equilibrium.

Suppose $\sum_{i=1}^ne_i<\tau$. The left part of (\ref{eq:NE}) in Proposition~\ref{prop:NE} will always fail.
Therefore, there does not exist a cooperative Nash equilibrium.
\end{proof}

\subsubsection{Balanced Instances}\label{sec:balanced}
We consider another special case where the success of the project always requires the same number of agents, and show that the problem of deciding if a cooperative Nash equilibrium exists can be solved in polynomial time.

\begin{definition}
A pivotal participation game instance $(\vec{e},\vec{m},\tau)$ is \emph{balanced} if all the minimal subsets of agents (see Definition~\ref{def:minimal}) have the same cardinality.
\end{definition}

For a typical natural example of balanced instances, we can have agents' endowments being approximately the same, say, agents' endowments are approximately $N$.
The instance is balanced as long as $\tau$ is not around any $bN$ for some integer $b\in\mathbb{Z}^+$.
For example, if agents' endowments are approximately $N$ and $\tau=3.5N$, the success of the project always needs exactly $4$ agents' endowments. 

\begin{theorem}
Deciding the existence of a cooperative Nash equilibrium is in P for balanced pivotal participation game instances.
\end{theorem}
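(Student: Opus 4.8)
The plan is to reduce the existence question to a structural search that, because of balancedness, collapses to a polynomial enumeration. Writing $r_i := e_i/m_i$, I would first record two consequences of Proposition~\ref{prop:NE}. The right inequality $e(S)<\tau+e_i$ in \eqref{eq:NE} is exactly condition~2 of Definition~\ref{def:minimal}, so every cooperative Nash equilibrium is a minimal subset; conversely, since $e(S)\ge\tau$ for any minimal subset, the left inequality $\max\{\tau,r_i\}\le e(S)$ reduces to the single requirement $e(S)\ge\max_{i\in S}r_i$. Hence a minimal subset $S$ is a cooperative Nash equilibrium if and only if $e(S)\ge\max_{i\in S}r_i$.

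Next I would exploit balancedness. Let $k$ be the common cardinality of all minimal subsets. The key claim is that in a balanced instance every $k$-subset $S$ with $e(S)\ge\tau$ is automatically minimal: were it not, some $e(S\setminus\{j\})\ge\tau$ would give a set of size $k-1$ above the threshold, which would in turn contain a minimal subset of size $<k$, contradicting balancedness. Consequently a cooperative Nash equilibrium exists if and only if there is a $k$-subset $S$ with $e(S)\ge\tau$ and $e(S)\ge\max_{i\in S}r_i$. I would compute $k$ as the least number of largest endowments whose sum reaches $\tau$ (a minimum-cardinality set reaching $\tau$ is itself minimal, so in a balanced instance this number is exactly the common size), and answer NO immediately if $\sum_i e_i<\tau$, since then no set reaches $\tau$ and no minimal subset exists, as in Theorem~\ref{thm:greedy}.

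Finally I would give the search. The only remaining coupling is that the bound $\max_{i\in S}r_i$ depends on which agents are chosen, so I would guess the bottleneck agent $\ell$ that attains this maximum. For each $\ell$, restrict to $A_\ell:=\{j:r_j\le r_\ell\}$, and form the maximum-endowment $k$-subset of $A_\ell$ that contains $\ell$, namely $\ell$ together with the $k-1$ largest endowments in $A_\ell\setminus\{\ell\}$; declare success once its total is at least $\max\{\tau,r_\ell\}$. If some $\ell$ succeeds, the resulting $S$ has size $k$, total at least $\tau$, and every ratio at most $r_\ell\le e(S)$, so it is a cooperative Nash equilibrium by the characterization above. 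Conversely, any cooperative Nash equilibrium $S^\ast$ is detected when $\ell$ is its bottleneck agent, because $S^\ast\subseteq A_\ell$ with $\ell\in S^\ast$, and the chosen subset has total at least $e(S^\ast)\ge\max\{\tau,r_\ell\}$. Sorting by $r_i$ and doing $O(n)$ work per candidate $\ell$ makes the whole procedure polynomial.

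I expect the main obstacle to be the interaction between the two opposing constraints: minimality caps $e(S)$ from above by $\tau+\min_{j\in S}e_j$, while the reward-ratio constraint forces $e(S)$ above $\max_{i\in S}r_i$. The crux that dissolves this tension is the balancedness observation that size-$k$, above-threshold subsets are automatically minimal, which eliminates the upper cap from the search and leaves the bottleneck-agent enumeration to handle only the lower bound.
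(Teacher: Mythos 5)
Your proof is correct, but it takes a genuinely different algorithmic route from the paper's. The paper runs an iterative swap procedure: it initializes $S$ to the $m$ agents of largest endowment, and while some $i\in S$ violates $l_i\le e(S)$ (where $l_i=\max\{\tau,e_i/m_i\}$), it swaps $i$ out for the next-largest remaining endowment; correctness rests on an inductive invariant that every discarded agent can belong to no Nash equilibrium, because $S$ always consists of the $m$ largest endowments among the agents not yet discarded. You instead enumerate a candidate ``bottleneck'' agent $\ell$, restrict to $A_\ell=\{j:r_j\le r_\ell\}$ with $r_j=e_j/m_j$, and test the single maximum-endowment $k$-subset of $A_\ell$ containing $\ell$ against the bound $\max\{\tau,r_\ell\}$; a one-shot exchange argument (the tested set has total at least that of any cooperative equilibrium with bottleneck $\ell$) replaces the paper's induction. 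Both arguments hinge on the same two structural facts, which you state explicitly and the paper uses more implicitly: every cooperative Nash equilibrium is a minimal subset in the sense of Definition~\ref{def:minimal} (so balancedness pins its size to $k$), and in a balanced instance any $k$-subset with $e(S)\ge\tau$ is automatically minimal (otherwise a successful set of size $k-1$ would contain a minimal subset of size $<k$). Your route arguably yields the cleaner correctness proof and an explicit certificate per candidate $\ell$, while the paper's swap algorithm needs the more delicate ``no discarded agent is ever usable'' induction; both run in polynomial (roughly quadratic) time. One small point to state explicitly in a final write-up: when $|A_\ell|<k$ the required $k$-subset does not exist and $\ell$ must simply be skipped, which does not affect completeness since a cooperative equilibrium with bottleneck $\ell$ forces $|A_\ell|\ge k$.
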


\begin{algorithm}[htb]
\caption{Decide the existence of cooperative Nash equilibrium for balanced instances}
\label{alg:balanced}
\textbf{Input}: a balanced pivotal participation game instance $(\vec{e}, \vec{m}, \tau)$\\
\textbf{Output}: a set of investing agents $S$, or $S=\emptyset$ (no cooperative Nash equilibrium)
\begin{algorithmic}[1] 
\STATE compute $l_i=\max\{\tau,\frac{e_i}{m_i}\}$ for each $i\in[n]$
\STATE let $m$ be the cardinality of a minimal subset of agents
\STATE sort the agents such that $e_1\geq e_2\geq\cdots\geq e_n$
\STATE initialize $S=\{1,\ldots,m\}$ and $T=\{m+1,\ldots,n\}$
\WHILE{there exists $i\in S$ with $l_i>e(S)$}
\IF{$T=\emptyset$}
\STATE \textbf{return} $S=\emptyset$
\ENDIF
\STATE remove $i$ from $S$
\STATE add the first element $j$ of $T$ to $S$
\STATE remove $j$ from $T$
\ENDWHILE
\STATE \textbf{return} $S$
\end{algorithmic}
\end{algorithm}

Firstly, Algorithm~\ref{alg:balanced} runs in polynomial time.
To see this, the while-loop is executed at most $n-m$ times, as each iteration of the while-loop removes one element from the set $T$, which initially has cardinality $n-m$.

To show the correctness of the algorithm, we consider the two cases regarding if Line~13 is reached.
Suppose Line~13 is reached.
We know that $|S|=m$ (as the cardinality of $S$ is never changed throughout the while-loop) and $l_i\leq e(S)$ for all $i\in S$.
Thus, the left inequality of Eqn.~\eqref{eq:NE} in Proposition~\ref{prop:NE} is satisfied for all $i\in S$.
Suppose the right inequality $e(S)<\tau+e_i$ fails for some $i\in S$, we have $e(S)\geq\tau+e_i$ and $e(S\setminus\{i\})\geq\tau$, so $S$ is not minimal, which contradicts to the balanced assumption.
Thus, the right inequality is also satisfied for all $i\in S$, and $S$ is a Nash equilibrium.

Suppose Line~13 is never reached, and the algorithm terminates at Line~7.
We prove that each element $i$ removed from $S$ can never be a part of a Nash equilibrium.
This can be proved by induction.

Consider the first iteration of the while-loop.
If $i\in S$ satisfies $l_i>e(S)$, for any set $S'$ of $m$ agents, we have $e(S')\leq e(S)<l_i$ since $S$ contains the $m$ agents with the largest endowments by Line~3.
Therefore, any $S'$ with $i\in S'$ will violate Eqn.~\eqref{eq:NE}, so $i$ cannot be a part of any Nash equilibrium.

For the inductive step, suppose each of the previously removed elements cannot be a part of a Nash equilibrium.
At the current iteration of the while-loop, we know that a Nash equilibrium must be a subset of $S\cup T$, and $S$ is a subset of $m$ agents having largest endowments.
By the same arguments, if $i\in S$ satisfies $l_i>e(S)$, for any set $S'\subseteq S\cup T$ of $m$ agents, we have $e(S')\leq e(S)<l_i$ since $S$ contains the $m$ agents with the largest endowments.
Therefore, any $S'$ with $i\in S'$ will violate Eqn.~\eqref{eq:NE}, so $i$ cannot be a part of any Nash equilibrium.

By the time Line~7 is reached, we have removed $n-m+1$ agents each of which cannot be a part of a Nash equilibrium.
The remaining $m-1$ agents cannot form a Nash equilibrium by the balanced assumption.
Thus, we know that a cooperative Nash equilibrium does not exist.

\section{Interventions}
Here we consider two possible interventions to promote agents to cooperate. We can \emph{add an external investment} or \emph{commit to a matching fund}.  We call an intervention \emph{valid} if there is a cooperative Nash equilibrium after the the intervention.  In both interventions, We want to find a valid intervention that minimizes the additional cost.   

We first show that it is NP-hard to approximate the optimal cost within any finite factor under both interventions (Theorem~\ref{thm:hardness-of-approximation}).  Then, we design two algorithms (\Cref{alg:external_investment,alg:matchingFunds}) which take parameters of the game $(\vec{e}, \vec{m}, \tau)$ and output a valid intervention.  We show the cost of each algorithm's intervention is close to the optimal with small additive errors (\Cref{thm:approximation_external,thm:approximation_matchingfunds}).  Finally, we compare these two interventions and prove that matching fund is more powerful than adding an  external investment in Proposition~\ref{prop:compare}.

\subsection{External Investment}\label{sec:external}
In this section, we consider mechanisms that add investment $\delta$ to the public project in addition to agents' investment.  Thus, agent $i$'s utility, $U_i(S)$ becomes
\begin{equation}
    e_i\mathbf{1}[i\notin S]+m_i (e(S)+\delta)\mathbf{1}\left[e(S)+\delta\ge \tau\right].
\end{equation}

By a similar analysis, we can show the following proposition similar to Proposition~\ref{prop:NE}.
We leave the details to the readers.

\begin{proposition}\label{prop:NE_investment}
Given a pivotal participation game instance $(\vec{e}, \vec{m}, \tau)$ with external investment $\delta$, $S$ is a cooperative Nash equilibrium if and only if for all $i\in S$
\begin{equation}\label{eq:NE_investment}
    \max\left\{\tau, \frac{e_i}{m_i}\right\}\le e(S)+\delta< \tau +e_i.
\end{equation}
\end{proposition}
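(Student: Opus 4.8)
The plan is to mirror the derivation of Proposition~\ref{prop:NE}, tracking how the external investment $\delta$ enters the success condition and the reward. The key observation is that the intervention modifies the game only by replacing every occurrence of the effective contribution $e(S)$ with $e(S)+\delta$ inside both the threshold indicator and the reward term, while leaving the free-ride payoff $e_i\mathbf{1}[i\notin S]$ untouched. Thus I expect the entire best-response analysis of Proposition~\ref{prop:conditional} and its consequence Proposition~\ref{prop:NE} to carry over verbatim with this substitution.

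Concretely, first I would re-run the best-response computation for a single agent $i$ against the others' actions $S_{-i}$. Writing the augmented contributions as $e(S_{-i})+\delta$ and $e(S_{-i})+e_i+\delta$, the same three-way case analysis applies: if agent $i$'s participation does not flip the project from failing to succeeding (that is, the analogue of \eqref{eq:condition1} with $\delta$ added throughout fails), then the marginal utility is at most $-e_i+m_ie_i<0$, so she free-rides; otherwise participation is the best response exactly when the reward gained, $m_i\bigl(e(S_{-i})+e_i+\delta\bigr)$, exceeds the retained endowment $e_i$, which rearranges to the analogue of \eqref{eq:condition2}. This yields that $i$ participates iff $e(S_{-i})+\delta<\tau\le e(S_{-i})+e_i+\delta$ and $\frac{1-m_i}{m_i}e_i\le e(S_{-i})+\delta$.

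Next I would translate this per-agent condition into the global characterization of a cooperative Nash equilibrium $S$. For each $i\in S$ I set $S_{-i}=S\setminus\{i\}$, so that $e(S_{-i})+\delta=e(S)+\delta-e_i$ and $e(S_{-i})+e_i+\delta=e(S)+\delta$. The success-requirement $e(S)+\delta\ge\tau$ together with the pivotality condition $e(S)+\delta-e_i<\tau$ gives $\tau\le e(S)+\delta<\tau+e_i$, which is the right inequality of \eqref{eq:NE_investment}; and condition \eqref{eq:condition2} shifted by $\delta$ combines with $\frac{1-m_i}{m_i}e_i+e_i=\frac{e_i}{m_i}$ to yield $\frac{e_i}{m_i}\le e(S)+\delta$. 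Taking the maximum with $\tau$ produces exactly the left inequality $\max\{\tau,\frac{e_i}{m_i}\}\le e(S)+\delta$. The converse direction—that any $S$ satisfying \eqref{eq:NE_investment} for all $i\in S$ is a cooperative Nash equilibrium—follows by checking that no participating agent wants to drop out (the right inequality guarantees she is pivotal) and no non-participating agent wants to join (their participation cannot help, since the project already succeeds for participants, exactly as in the unperturbed argument).

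I do not anticipate a genuine obstacle here, which is presumably why the authors leave the details to the reader; the only point requiring mild care is bookkeeping the placement of $\delta$ consistently across the three quantities $e(S_{-i})$, $e(S_{-i})+e_i$, and $e(S)$, and confirming that the free-ride term does \emph{not} pick up a $\delta$. Once that substitution is handled uniformly, the algebra collapses to the same inequalities as in Proposition~\ref{prop:NE} with $e(S)$ replaced by $e(S)+\delta$, giving \eqref{eq:NE_investment}.
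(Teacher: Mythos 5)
Your proposal is correct and matches exactly what the paper intends: the paper gives no explicit proof of Proposition~\ref{prop:NE_investment}, stating only that it follows ``by a similar analysis'' to Propositions~\ref{prop:conditional} and~\ref{prop:NE}, and your argument carries out precisely that analysis, substituting $e(S)+\delta$ for $e(S)$ in the success indicator and reward while keeping the free-ride term unchanged. The per-agent best-response conditions, the translation via $S_{-i}=S\setminus\{i\}$, and the check that free riders never join are all handled correctly.
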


Thereafter, we will assume $\tau+e_i>\frac{e_i}{m_i}$ for each agent $i$, for otherwise, $i$ will never be a part of a cooperative Nash equilibrium by Proposition~\ref{prop:NE_investment}, and we can remove $i$ from consideration.

We say a solution $(S,\delta)$ is \emph{valid} if $S$ is a Nash equilibrium under external investment $\delta$.
The following theorem shows that the optimal solution to this minimization problem always exists.

\begin{restatable}{theorem}{thmexistexternal}\label{thm:exist_external}
Given a pivotal participation game $(\vec{e}, \vec{m}, \tau)$, let $\Delta\subseteq[0,\infty)$ be the set of all valid external investment $\delta$.
We have $\inf\Delta\in\Delta$.
\end{restatable}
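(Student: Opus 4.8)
The plan is to reduce the claim to the observation that $\Delta$ is a finite union of half-open intervals, each of which contains its own left endpoint, so that its infimum is in fact a minimum. Fix a nonempty support $S\subseteq[n]$ and ask for which external investments $\delta\ge 0$ the pair $(S,\delta)$ is valid. By Proposition~\ref{prop:NE_investment}, this holds exactly when, for every $i\in S$,
\[
\max\left\{\tau,\frac{e_i}{m_i}\right\}\le e(S)+\delta<\tau+e_i .
\]
Reading these as constraints on the single variable $\delta$, the left-hand inequalities together with $\delta\ge0$ form a finite family of \emph{non-strict} lower bounds, while the right-hand inequalities form a finite family of \emph{strict} upper bounds. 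Hence the set $I_S$ of valid $\delta$ for this fixed $S$ is the interval
\[
I_S=[\alpha_S,\beta_S),\qquad
\alpha_S:=\max\left\{0,\ \max_{i\in S}\max\left\{\tau,\frac{e_i}{m_i}\right\}-e(S)\right\},\qquad
\beta_S:=\tau+\min_{i\in S}e_i-e(S),
\]
which is left-closed and right-open (and empty when $\alpha_S\ge\beta_S$). The left-closedness is the crucial feature, and it comes directly from the non-strict ``$\le$'' in Proposition~\ref{prop:NE_investment}.

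Next I would assemble the global picture: $\Delta=\bigcup_{\emptyset\ne S\subseteq[n]} I_S$, a union of at most $2^n-1$ intervals, hence a \emph{finite} union. Under the standing assumption $\tau+e_i>e_i/m_i$ this union is nonempty, as one can already see from singletons: for a single agent $i$ one has $\beta_{\{i\}}=\tau$ and $\alpha_{\{i\}}<\tau$ precisely because $e_i/m_i<\tau+e_i$, so $I_{\{i\}}\neq\emptyset$.

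Finally, because $\Delta$ is a finite union of intervals, its infimum is the smallest of the component infima, $\inf\Delta=\min\{\alpha_S:I_S\ne\emptyset\}$; the right endpoints $\beta_S$ never produce the infimum since each nonempty $I_S$ satisfies $\alpha_S<\beta_S$. Let $S^\ast$ attain this minimum. Since $I_{S^\ast}$ is left-closed, $\alpha_{S^\ast}\in I_{S^\ast}\subseteq\Delta$, and therefore $\inf\Delta=\alpha_{S^\ast}\in\Delta$, as claimed.

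I expect the ``main obstacle'' here to be bookkeeping rather than conceptual depth: one must verify carefully that each $I_S$ is genuinely left-closed (so that the minimizing left endpoint is actually attained) and that the union ranges over only finitely many supports (so that the infimum is a minimum over finitely many left endpoints). Both facts follow immediately from Proposition~\ref{prop:NE_investment} and from there being finitely many subsets $S$; the right-openness of the intervals is harmless, since it affects only the unattained right endpoints, which play no role in determining $\inf\Delta$.
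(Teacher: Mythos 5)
Your proof is correct and takes essentially the same route as the paper's: both decompose $\Delta$ over the finitely many nonempty supports $S$ and observe, via Proposition~\ref{prop:NE_investment}, that the set of valid $\delta$ for each fixed $S$ is left-closed and hence attains its infimum whenever it is nonempty. Your explicit interval form $[\alpha_S,\beta_S)$ is simply a cleaner packaging of the paper's three-way case analysis (the set is empty, its infimum is $0$, or its infimum is $\max_{i\in S}\max\{\tau,e_i/m_i\}-e(S)$).
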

\begin{proof}
Firstly, we have $\Delta\neq\emptyset$.
Since we have assumed $\tau+e_i>\frac{e_i}{m_i}$ for every $i$, $(S=\{1\}, \delta=\max\{\tau,\frac{e_1}{m_1}\}-e_1)$ is always a valid solution.

Given a subset of agents $S$, let $\Delta(S)$ be the set of all possible values $\delta$ such that $S$ is a Nash equilibrium under external investment $\delta$.
Set $\Delta(\emptyset)=\emptyset$.
Since the power set of $[n]$ is finite, it is easy to prove that 
$$\inf\Delta=\min_{S:S\subseteq[n],\Delta(S)\neq\emptyset}\inf\Delta(S).$$
It suffices to show that either $\inf\Delta(S)\in\Delta(S)$ or $\Delta(S)=\emptyset$ for each $S$.
We consider a fixed $S\neq\emptyset$ from now on.

For each $i\in[n]$, let $l_i=\max\{\tau,e_i/m_i\}$ and $u_i=\tau+e_i$.

If $\max_{i\in S}l_i\geq\min_{i\in S}u_i$ or $e(S)\geq\min_{i\in S}u_i$, we know immediately $\Delta(S)=\emptyset$ by Proposition~\ref{prop:NE_investment}.
Suppose $\max_{i\in S}l_i<\min_{i\in S}u_i$ and $e(S)<\min_{i\in S}u_i$ from now on.

If $\max_{i\in S}l_i\leq e(S)<\min_{i\in S}u_i$, by Proposition~\ref{prop:NE_investment}, $\delta=0$ makes $S$ a Nash equilibrium, and $\inf\Delta(S)=0\in\Delta(S)$.
If $e(S)<\max_{i\in S}l_i$, it is easy to see that $\delta=\max_{i\in S}l_i-e(S)$ is the minimum $\delta$ to make $S$ a Nash equilibrium, and $\inf\Delta(S)=\max_{i\in S}l_i-e(S)\in\Delta(S)$.
\end{proof}

Firstly, Theorem~\ref{thm:npc} straightforwardly implies that the problem is NP-hard to approximate to any finite multiplicative factor. 

\begin{theorem}\label{thm:hardness-of-approximation}
Given a pivotal participation game instance $(\vec{e}, \vec{m}, \tau)$, letting $\delta^\ast\geq0$ be the minimum valid external investment, for any $F>0$ that may depend on $(\vec{e}, \vec{m}, \tau)$, it is NP-hard to approximate $\delta^\ast$ to within factor $F$.
\end{theorem}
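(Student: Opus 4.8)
The plan is to derive this hardness-of-approximation result directly from the NP-completeness in Theorem~\ref{thm:npc}, exploiting the fact that the construction there produces a sharp gap: in the YES case a cooperative Nash equilibrium exists already at $\delta=0$, whereas in the NO case the only Nash equilibrium is the empty set. First I would observe that, given the reduction from the partition problem used in Theorem~\ref{thm:npc}, the instance $(\vec e,\vec m,\tau)$ has a cooperative Nash equilibrium with zero external investment exactly when the partition instance is a YES instance, i.e. $\delta^\ast=0$ in that case. The crux is then to show that in the NO case $\delta^\ast$ is bounded away from zero by a concrete, computable amount, so that the two cases are separated by a multiplicative gap that no finite factor $F$ can bridge.

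The key steps, in order, are as follows. I would take an arbitrary partition instance, run the polynomial-time construction from the proof of Theorem~\ref{thm:npc} to obtain $(\vec e,\vec m,\tau)$, and ask for the minimum valid external investment $\delta^\ast$ for this instance. In the YES case, the set $S=\mathcal T\cup\{2T+1,2T+2\}$ is a cooperative Nash equilibrium with no added investment, so $\delta^\ast=0$. In the NO case, no nonempty $S$ satisfies Eqn.~\eqref{eq:NE}, so I must argue that any external investment making some nonempty $S$ a Nash equilibrium (per Proposition~\ref{prop:NE_investment}, Eqn.~\eqref{eq:NE_investment}) is strictly positive, and in fact at least some explicit $\delta_0>0$. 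Because all endowments $e_i$ are integers and the thresholds $l_i=\max\{\tau,e_i/m_i\}$ and caps $u_i=\tau+e_i$ in the NO-case analysis miss every attainable integer value of $e(S)$, there is an integrality gap of at least $1$ between the feasible window $[\,l_i,u_i)$ and the nearest achievable $e(S)$, forcing $\delta^\ast\ge 1$. Thus a YES instance gives $\delta^\ast=0$ and a NO instance gives $\delta^\ast\ge 1$.

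Given this gap, I would finish by a standard separation argument. Suppose, for contradiction, a polynomial-time algorithm approximated $\delta^\ast$ to within any finite factor $F$. On a YES instance, where $\delta^\ast=0$, any $F$-approximation must also output $0$, since $F\cdot 0=0$. On a NO instance, where $\delta^\ast\ge 1>0$, any finite-factor approximation must output a strictly positive value. Hence the sign of the algorithm's output distinguishes YES from NO instances of partition in polynomial time, contradicting the NP-hardness established in Theorem~\ref{thm:npc} (unless $\mathrm{P}=\mathrm{NP}$). This works uniformly for every $F>0$, even $F$ depending on $(\vec e,\vec m,\tau)$, because the obstruction is the qualitative jump from $\delta^\ast=0$ to $\delta^\ast>0$, which no multiplicative factor can cross.

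The main obstacle I anticipate is the NO-case lower bound $\delta^\ast\ge 1$: I must verify that the integrality of the $e_i$ together with the specific choices $m_i=e_i/(\tau+e_{2T+2}-1)$ guarantees that no value of $e(S)+\delta$ can land in the feasible interval for any nonempty $S$ with $\delta$ arbitrarily small. This amounts to re-examining each case of the NO-case argument in Theorem~\ref{thm:npc} and checking that the violation of Eqn.~\eqref{eq:NE} is not merely strict but quantitatively bounded below, so that closing it requires adding at least a full integer unit. Everything else is the routine separation logic above, but this quantitative refinement of the earlier case analysis is where the real work lies.
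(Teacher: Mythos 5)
Your proposal is correct and follows essentially the same route as the paper: the paper's proof simply observes that Theorem~\ref{thm:npc} makes it NP-complete to decide whether $\delta^\ast=0$, and any finite multiplicative-factor approximation would decide exactly that question. The only difference is that your quantitative NO-case bound $\delta^\ast\ge 1$ (the step you flag as ``where the real work lies'') is superfluous: since Theorem~\ref{thm:exist_external} guarantees the minimum valid investment is attained, $\delta^\ast>0$ in the NO case is automatic from the nonexistence of a cooperative equilibrium at $\delta=0$, and, as you yourself note, the qualitative jump from $0$ to positive is all the separation argument needs.
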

\begin{proof}
Theorem~\ref{thm:npc} implies that it is NP-complete to decide if $\delta^\ast=0$.
This theorem concludes immediately.
\end{proof}

Fortunately, this minimization problem has an additive term approximation algorithm, with the additive term being the maximum endowments among the agents: $e=\max_{i:1\leq i\leq n}\{e_i\}$.  The algorithm is described in Algorithm~\ref{alg:external_investment}.

\begin{algorithm}[htb]
\caption{Approximation algorithm for pivotal participation game with external investment}
\label{alg:external_investment}
\textbf{Input}: a pivotal participation game instance $(\vec{e}, \vec{m}, \tau)$\\
\textbf{Output}: a set of agent $S$ that invest; the amount of external investment $\delta$
\begin{algorithmic}[1] 
\STATE for each $i\in [n]$, let $u_i=\tau +e_i$ and $l_i=\max\{\tau, \frac{e_i}{m_i}\}$
\STATE sort agents such that $l_1\leq l_2\leq\cdots\leq l_n$
\FOR{$i=1,\ldots,n$}
\STATE initialize $S_i=\emptyset$
\STATE let $K_i=\{k\in[n]:l_k\leq l_i<u_k\}$ 
\STATE sort agents in $K_i$ so that $e_{k_1}\geq e_{k_2}\geq\cdots\geq e_{k_{|K_i|}}$\label{line:6}
\FOR{$j=1,\ldots,|K_i|$}
\IF{$e(S_i\cup\{k_j\})>l_i$}
\STATE \textbf{break}
\ENDIF
\STATE update $S_i=S_i\cup\{k_j\}$
\ENDFOR
\STATE set $\delta_i=l_i-e(S_i)$
\ENDFOR
\STATE find $i^o=\argmin_i\delta_i$
\STATE \textbf{return} $(S_{i^o},\delta_{i^o})$
\end{algorithmic}
\end{algorithm}

The algorithm works as follows.
It considers $n$ possible values $l_1,\ldots,l_n$ for $e(S)+\delta$.
For each possibility $e(S)+\delta=l_i$, it iteratively adds an agent $k$ with $l_k\leq l_i<u_k$ to $S$ until either no more agent can be added (when the for-loop at Line~7 terminates without reaching Line~9) or $e(S)\leq l_i$ cannot be maintained (when the for-loop at Line~7 exits at Line~9).
The algorithm finds a solution for $e(S)+\delta=l_i$ by setting $\delta=l_i-e(S)$ for $S$ found by the above step.
Finally, the algorithm compares the $n$ solutions and output the optimal one.
It is straightforward to check that the algorithm runs in polynomial time.

\begin{remark}\label{remark:fairness_external}
Our algorithm has a desirable fairness property: ``richer people contributes first''.~\cite{list2009role}
In each of the $n$ solutions the algorithm outputs, Line~\ref{line:6} of the algorithm guarantees that agents with larger endowments invest first, whenever possible.
The larger the index $i$ is, the better the solution $(S_i,\delta_i)$ preserves this property.
In particular, for $i=n$, $(S_n,\delta_n)$ strictly preserves this property (by Line~5 and the fact that $u_k$ is larger whenever $e_k$ is larger).
When implementing Algorithm~\ref{alg:external_investment}, we can compare the $n$ solutions and trade off between the investment $\delta$ and the said fairness property.
\end{remark}

\begin{theorem}\label{thm:approximation_external}
Given a pivotal participation game $(\vec{e}, \vec{m}, \tau)$, letting $\delta^\ast\geq0$ be the minimum valid external investment and $e=\max_{i:1\leq i\leq n}\{e_i\}$, Algorithm~\ref{alg:external_investment} finds a solution $(S,\delta)$ with $\delta\leq\max\{e,\delta^\ast\}$.
\end{theorem}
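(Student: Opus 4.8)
The plan is to compare the candidate that Algorithm~\ref{alg:external_investment} builds for one carefully chosen target value against the unknown optimum $(S^\ast,\delta^\ast)$. Writing $l_i=\max\{\tau,e_i/m_i\}$ and $u_i=\tau+e_i$ as in the algorithm, set $v^\ast=e(S^\ast)+\delta^\ast$; by Proposition~\ref{prop:NE_investment}, validity of $(S^\ast,\delta^\ast)$ means $l_k\le v^\ast<u_k$ for every $k\in S^\ast$. Let $i^\ast\in\argmax_{k\in S^\ast}l_k$ and $t^\ast=l_{i^\ast}$, so that $t^\ast\le v^\ast$. First I would establish the containment $S^\ast\subseteq K_{i^\ast}$: for each $k\in S^\ast$ we have $l_k\le t^\ast$ by the choice of $i^\ast$ and $t^\ast\le v^\ast<u_k$, hence $l_k\le t^\ast<u_k$, which is exactly the membership test defining $K_{i^\ast}$. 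Thus the outer-loop iteration at target $t^\ast=l_{i^\ast}$ runs its greedy over a family containing all of $S^\ast$, and I will bound $\delta_{i^\ast}=t^\ast-e(S_{i^\ast})$ against $\delta^\ast$ and $e$.

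The structural fact that makes the argument go through is that $e_k<l_k$ for every agent $k$ with $e_k>0$: since $m_k<1$ we have $l_k\ge e_k/m_k>e_k$ (agents with $e_k=0$ never change $e(S)$ and cannot belong to any successful profile, so they cause no trouble). Consequently every $k\in K_{i^\ast}$ satisfies $e_k<l_k\le t^\ast$, i.e.\ each feasible agent is individually lighter than the target. I would use this twice. First, it guarantees that the greedy at Line~7 (which inserts agents of $K_{i^\ast}$ in decreasing endowment order while the running sum stays $\le t^\ast$) always admits at least its first agent, so $S_{i^\ast}\neq\emptyset$; the same remark holds at every target $l_i$ with $K_i\neq\emptyset$. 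Second, since $S_{i^\ast}\subseteq K_{i^\ast}$ and the running sum never exceeds $t^\ast$, every $k\in S_{i^\ast}$ obeys $l_k\le t^\ast=e(S_{i^\ast})+\delta_{i^\ast}<u_k$ with $\delta_{i^\ast}\ge0$, so by Proposition~\ref{prop:NE_investment} each produced $(S_i,\delta_i)$ is a \emph{valid} solution. This settles that the returned $(S_{i^o},\delta_{i^o})$ is valid and lets me reduce the minimization over all candidates to the single estimate for index $i^\ast$.

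The bound itself then splits into two cases according to how the inner loop at target $t^\ast$ terminates. If it exhausts $K_{i^\ast}$ without ever breaking, then $S_{i^\ast}=K_{i^\ast}\supseteq S^\ast$, so $e(S_{i^\ast})\ge e(S^\ast)$ and $\delta_{i^\ast}=t^\ast-e(S_{i^\ast})\le t^\ast-e(S^\ast)\le v^\ast-e(S^\ast)=\delta^\ast$. If instead it breaks while trying to insert some agent $k_j$, then $e(S_{i^\ast})+e_{k_j}>t^\ast$, whence $\delta_{i^\ast}=t^\ast-e(S_{i^\ast})<e_{k_j}\le e$. In either case $\delta_{i^\ast}\le\max\{e,\delta^\ast\}$, and since the algorithm returns $\delta_{i^o}=\min_i\delta_i\le\delta_{i^\ast}$, the claimed inequality follows. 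I expect the main obstacle to be the bookkeeping around validity and nonemptiness rather than the final case split: the whole argument hinges on first extracting the observation $e_k<l_k$ and on identifying $t^\ast=\max_{k\in S^\ast}l_k$ as the target whose feasible set $K_{i^\ast}$ simultaneously contains $S^\ast$ and keeps the greedy from stalling at its first step. Once those are in place, the additive error $e$ is exactly the classical ``one extra item'' slack of greedy subset-sum packing.
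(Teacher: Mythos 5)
Your proof is correct, and while it shares the overall skeleton of the paper's argument (prove every candidate $(S_i,\delta_i)$ is a valid cooperative equilibrium, then compare the greedy at one distinguished target value against the optimum via a break/no-break dichotomy), it differs in two places in a way that genuinely simplifies matters. First, the paper's distinguished index $i^\dag$ is the \emph{largest index over all agents} with $l_{i^\dag}\le e(S^\ast)+\delta^\ast$, which need not belong to $S^\ast$, so proving $S^\ast\subseteq K_{i^\dag}$ takes two separate contradiction arguments; your choice $i^\ast\in\argmax_{k\in S^\ast}l_k$ makes the containment $S^\ast\subseteq K_{i^\ast}$ immediate from Proposition~\ref{prop:NE_investment}. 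Second, and more substantively, the paper conditions \emph{globally} on whether the break at Line~9 is ever executed in any iteration, and in the no-break case it must establish the exact equality $e(S^\ast)+\delta^\ast=l_{i^\dag}$ --- a delicate step that invokes optimality of $\delta^\ast$ (otherwise $\delta^\ast=0$) and then derives a contradiction with the global no-break assumption. You sidestep this entirely: since $t^\ast=l_{i^\ast}\le e(S^\ast)+\delta^\ast$ holds by definition, the no-break case at the \emph{single} iteration $i^\ast$ yields $\delta_{i^\ast}=t^\ast-e(S_{i^\ast})\le t^\ast-e(S^\ast)\le\delta^\ast$ using only the inequality (containment gives $e(S_{i^\ast})=e(K_{i^\ast})\ge e(S^\ast)$), and the break case yields $\delta_{i^\ast}<e_{k_j}\le e$; the minimization at Line~15 then finishes the proof. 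What your route buys is a case analysis local to one iteration and the disappearance of the equality argument; what the paper's route buys is essentially nothing extra here, so yours is the cleaner proof of the same bound. The only point worth making explicit (a convention you share with the paper, which sets $\Delta(\emptyset)=\emptyset$ in the proof of Theorem~\ref{thm:exist_external}) is that the optimum is attained at a nonempty $S^\ast$, so that $i^\ast$ is well defined; your treatment of zero-endowment agents and of nonemptiness of each $S_i$ is otherwise careful and matches the validity argument the paper gives.
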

\begin{proof}
We first show that the solution output by Algorithm~\ref{alg:external_investment} is valid.
We show that the solution $(S_i,\delta_i)$ found in each iteration of the for-loop is valid.
We have $e(S_i)+\delta_i=l_i$ and each $k\in S_i$ satisfies $l_k\leq l_i<u_k$ (guaranteed by Line~13 and 5 of the algorithm), Proposition~\ref{prop:NE_investment} implies that $S_i$ is a Nash equilibrium.
It remains to prove the approximation guarantee $\delta\leq\max\{e,\delta^\ast\}$.

Suppose agents are ordered such that $l_1\leq\cdots\leq l_n$.
Let $(S,\delta)$ be the solution output by the algorithm. 
Let $(S_i,\delta_i)$ be the solution output at $i$-th iteration of the for-loop at Line~3.
Let $(S^\ast,\delta^\ast)$ be the optimal solution.

Firstly, if the inner for-loop at Line~7 exits at Line~9 for certain iteration $i$ of the for-loop at Line~3, then $\delta<e$, in which case the approximation guarantee is proved.
To see this, Line~8 and 13 ensure that $\delta_i= l_i-e(S_i)<e_{k_j}\leq e$.
Since the final solution output by the algorithm is no worse than $(S_i,\delta_i)$, we have $\delta\leq\delta_i<e$.
From now on, we assume that the ``break'' statement at Line~9 has never been executed for all iterations.

Next, we consider the optimal solution $(S^\ast,\delta^\ast)$.
Let $i^\dag$ be the agent with the largest index such that $e(S^\ast)+\delta^\ast\geq l_{i^\dag}$.
We show that each $j\in S^\ast$ satisfies $l_j\leq l_{i^\dag}<u_j$.
Suppose there exists $j\in S^\ast$ with $l_j>l_{i^\dag}$.
We first know $i^\dag<n$.
By Proposition~\ref{prop:NE_investment}, we must have $e(S^\ast)+\delta^\ast\geq l_j\geq l_{i^\dag+1}$ to make the solution valid.
However, this contradicts to our assumption that $i^\dag$ is the agent with the largest index satisfying $e(S^\ast)+\delta^\ast\geq l_{i^\dag}$.
Suppose there exists $j\in S^\ast$ with $u_j\leq l_{i^\dag}$.
Since we assumed $l_{i^\dag}\leq e(S^\ast)+\delta^\ast$, Proposition~\ref{prop:NE_investment} implies that $(S^\ast,\delta^\ast)$ should not have been valid.

In addition, our assumption that the ``break'' statement at Line~9 has never been executed implies that $e(S^\ast)+\delta^\ast=l_{i^\dag}$.
Suppose otherwise $e(S^\ast)+\delta^\ast>l_{i^\dag}$.
We have $\delta^\ast=0$, for otherwise we can always decrease the value of $\delta^\ast$ while maintaining $e(S^\ast)+\delta^\ast\geq l_{i^\dag}$, which contradicts to that $\delta^\ast$ is optimal.
As a result, $e(S^\ast)+\delta^\ast>l_{i^\dag}$ implies $e(S^\ast)>l_{i^\dag}$.
Since we have proved that each $j\in S^\ast$ satisfies $l_j\leq l_{i^\dag}<u_j$, at iteration $i^\dag$ of our algorithm, we must have $S^\ast\subseteq K_{i^\dag}$.
Therefore, $e(K_{i^\dag})\geq e(S^\ast)>l_{i^\dag}$, the ``break'' statement will be executed before all the $|K_{i^\dag}|$ agents are added to $S_{i^\dag}$, which contradicts to our assumption.

Finally, we have proved that $e(S^\ast)+\delta^\ast=l_{i^\dag}$, and we also have $S^\ast\subseteq K_{i^\dag}$ at iteration $i^\dag$ of our algorithm.
Since the ``break'' statement has not been executed, we have $S_{i^\dag}=K_{i^\dag}$, and $\delta_{i^\dag}=l_{i^\dag}-e(S_{i^\dag})\leq l_{i^\dag}-e(S^\ast)=\delta^\ast$.
Therefore, the solution output at $i^\dag$-th iteration of our algorithm is already optimal, and the final output of our algorithm is no worse than this.
\end{proof}

Finally, we provide an examples showing that adding an external investment can sometime hurt cooperation.
\begin{example}[Harm of adding an external investment]\label{ex:harm_external}
Let $\tau = 9$, $\vec{e} = (4,4,4)$ and $\vec{m} = (4/11, 4/11, 4/11)$.  $S=\{1,2,3\}$ is a cooperative Nash equilibrium.
However, if we add an external endowment $\delta = 2$, there is no cooperative Nash equilibrium by Proposition~\ref{prop:NE_investment}.
\end{example}

\subsection{Matching Funds}\label{sec:matching}
In this section, we consider another intervention---matching funds.  
The mechanism can commit to a rate $\rho\ge0$, and increase the investment from $e(S)$ to $(1+\rho)e(S)$. Thus, agent $i$'s utility $U_i(S)$ becomes
\begin{equation}\label{eq:utility_matching}
    e_i\mathbf{1}[i\notin S]+m_i (1+\rho)e(S)\mathbf{1}\left[(1+\rho)e(S)\ge \tau\right].
\end{equation}

\begin{assumption}\label{assumption:public_goods_game}
We assume that the rate of matching funds $\rho$ has a budget constraint $\rho<\bar\rho$ where $\bar\rho=\frac1{\max_{i\in[n]}m_i}-1$.
\end{assumption}

This assumption ensures $(1+\rho)m_i<1$ for each agent $i$.  To justify this assumption, the total reward level $\sum_i m_i$ can be seen as the return on investment of the public project, and is around $1$ in practice, so each agent's reward level is much less than $1$, e.g., $1/n$.  The rate of matching fund $\rho$ is often a small constant (e.g., 1-to-1, 2-to-1, and 3-to-1).  This makes $m_i(1+\rho)$ significantly less than $1$.
Secondly, the behavior of an agent with $m_i(1+\rho)\geq1$ is fundamentally and unnaturally different from an agent in a normal public goods game: from Eqn.~\eqref{eq:utility_matching} and some simple calculations, this agent will always invest as long as the project is successful, even if the project can succeed without him/her!

Similar to Proposition~\ref{prop:NE} and \ref{prop:NE_investment} we have the following characterization of cooperative Nash equilibrium.
The details are left to the readers.
\begin{proposition}\label{prop:NE_matching}
Given a pivotal participation game instance $(\vec{e}, \vec{m}, \tau)$ with matching funds $\rho<\bar\rho$, $S$ is a cooperative Nash equilibrium if and only if for all $i\in S$ 
\begin{equation}\label{eq:NE_matching}
    \max\left\{\tau, \frac{e_i}{m_i}\right\}\le (1+\rho)e(S)< \tau +(1+\rho)e_i.
\end{equation}
\end{proposition}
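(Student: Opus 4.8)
The plan is to mirror the argument behind Proposition~\ref{prop:NE}, tracking how the matching rate $\rho$ rescales the effective contribution from $e(S)$ to $(1+\rho)e(S)$, and to lean on Assumption~\ref{assumption:public_goods_game} exactly where Proposition~\ref{prop:conditional} used $m_i<1$. First I would re-derive the best-response characterization of Proposition~\ref{prop:conditional} in the matching-fund setting. Fix $i$ and the other agents' action $S_{-i}$, and write $E=(1+\rho)(e(S_{-i})+e_i)$ for the effective total when $i$ participates. If $i$'s participation does not flip the project's success status---either $(1+\rho)e(S_{-i})\ge\tau$ already, or $E<\tau$---then from Eqn.~\eqref{eq:utility_matching} the gain from participating is $m_i(1+\rho)e_i-e_i=-e_i\bigl(1-m_i(1+\rho)\bigr)$, which is strictly negative precisely because Assumption~\ref{assumption:public_goods_game} forces $m_i(1+\rho)<1$. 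Hence $i$ participates only when she is pivotal, i.e.\ $(1+\rho)e(S_{-i})<\tau\le E$; rewriting these two inequalities in terms of $E$ gives $\tau\le E$ and $E<\tau+(1+\rho)e_i$. When $i$ is pivotal the comparison is $U_i(S_{-i}\cup\{i\})=m_iE$ against $U_i(S_{-i})=e_i$, so participation is optimal iff $m_iE\ge e_i$, i.e.\ $E\ge e_i/m_i$. Collecting these conditions yields $\max\{\tau,e_i/m_i\}\le E<\tau+(1+\rho)e_i$, which is exactly Eqn.~\eqref{eq:NE_matching} with $E=(1+\rho)e(S)$ for $S=S_{-i}\cup\{i\}$.

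With the best-response analysis in hand, the equivalence follows in both directions. For the forward direction, if $S$ is a cooperative Nash equilibrium then every $i\in S$ is best-responding to $S_{-i}=S\setminus\{i\}$, so applying the characterization above gives \eqref{eq:NE_matching} for each $i\in S$. For the converse, suppose \eqref{eq:NE_matching} holds for all $i\in S$. The left inequality gives $(1+\rho)e(S)\ge\tau$, so the project succeeds. Each $i\in S$ is best-responding by the computation above (the two inequalities are precisely the pivotal condition together with $m_i(1+\rho)e(S)\ge e_i$). It remains to check non-participants: for $j\notin S$, adding $j$ keeps the project successful since $(1+\rho)(e(S)+e_j)\ge(1+\rho)e(S)\ge\tau$, so $U_j(S\cup\{j\})-U_j(S)=m_j(1+\rho)e_j-e_j<0$, again by $m_j(1+\rho)<1$. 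Thus nobody wants to deviate and $S$ is a cooperative Nash equilibrium.

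The argument is essentially a rescaling of the proofs of Proposition~\ref{prop:conditional} and Proposition~\ref{prop:NE}, so I expect no serious obstacle. The one point requiring care---and the only place the matching-fund model genuinely departs from the plain game---is the repeated use of $m_i(1+\rho)<1$ guaranteed by Assumption~\ref{assumption:public_goods_game}. This is what makes a non-pivotal agent strictly prefer to free-ride; as the remark preceding the proposition notes, an agent with $m_i(1+\rho)\ge1$ would instead always invest whenever the project succeeds, breaking the characterization. So the main thing to verify is that every step where the original proof invoked $m_i<1$ can be replaced by $m_i(1+\rho)<1$, and that the budget constraint $\rho<\bar\rho$ supplies this inequality uniformly over all agents $i$.
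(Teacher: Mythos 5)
Your proof is correct and follows exactly the route the paper intends: the paper omits this proof (``details are left to the readers''), indicating it parallels Propositions~\ref{prop:conditional} and \ref{prop:NE}, and your argument is precisely that rescaling, with Assumption~\ref{assumption:public_goods_game}'s guarantee $m_i(1+\rho)<1$ substituting for $m_i<1$ at the right places. The only nitpick is that in the non-pivotal case where the project fails even with $i$'s participation, the gain from participating is $-e_i$ rather than exactly $-e_i\bigl(1-m_i(1+\rho)\bigr)$; the sign conclusion is unaffected, so this is a harmless imprecision (the paper's own Proposition~\ref{prop:conditional} states it as an inequality).
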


We also have a minimization problem here.  The objective of this problem can be either formulated by the rate $\rho$ or the cost $\rho\cdot e(S)$ where $S$ is a cooperative Nash equilibrium under \eqref{eq:utility_matching}.
We say a solution $(S,\rho)$ is valid if $S$ is a Nash equilibrium under the rate of matching funds $\rho$.

First, we want to compare the power of these two interventions.  The following result shows that, for any valid external investment $\delta$, there is a valid matching funds with some $\rho$ that has less or equal cost.  Intuitively, this shows that matching funds is more powerful than adding external investment.
\begin{proposition}\label{prop:compare}
For all $(\vec{e}, \vec{m}, \tau)$, for any valid solution $(S,\delta)$ where the external investment $\delta$ is not prohibitively large $\delta<\bar\rho\cdot e(S)$,  $S$ is also a cooperative Nash equilibrium with matching funds $\rho= \delta/e(S)$.  Moreover, the cost of matching funds $\rho\cdot e(S)$ is exactly $\delta$.
\end{proposition}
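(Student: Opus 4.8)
The plan is to prove the claim by directly comparing the two equilibrium characterizations, Proposition~\ref{prop:NE_investment} for external investment and Proposition~\ref{prop:NE_matching} for matching funds, under the substitution $\rho = \delta/e(S)$. The key algebraic identity driving everything is
\[
    (1+\rho)e(S) = e(S) + \rho\,e(S) = e(S) + \delta,
\]
so the ``effective total investment'' appearing in the matching-funds characterization is numerically identical to the one appearing in the external-investment characterization. The \emph{moreover} part is then immediate, since $\rho\cdot e(S) = (\delta/e(S))\cdot e(S) = \delta$.

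First I would check that $\rho$ is admissible, i.e.\ that Assumption~\ref{assumption:public_goods_game} holds so that Proposition~\ref{prop:NE_matching} even applies. The hypothesis $\delta < \bar\rho\cdot e(S)$ gives, upon dividing by $e(S) > 0$, that $\rho = \delta/e(S) < \bar\rho$, as required.

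Next I would verify the two inequalities of \eqref{eq:NE_matching} for every $i\in S$. For the left inequality, the validity of $(S,\delta)$ together with Proposition~\ref{prop:NE_investment} gives $\max\{\tau, e_i/m_i\} \le e(S)+\delta$; by the identity above this is exactly $\max\{\tau, e_i/m_i\} \le (1+\rho)e(S)$, so the left inequality transfers verbatim. For the right inequality I would use $\rho \ge 0$, which yields $(1+\rho)e_i \ge e_i$ and hence $\tau + (1+\rho)e_i \ge \tau + e_i$; combining this with the external-investment bound $e(S)+\delta < \tau + e_i$ and the identity $(1+\rho)e(S) = e(S)+\delta$ gives the required $(1+\rho)e(S) < \tau + (1+\rho)e_i$.

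There is no genuinely hard step here; the only points requiring care are confirming the budget constraint $\rho < \bar\rho$ so that the matching-funds characterization is valid, and noting that the right inequality of the matching characterization is in fact \emph{weaker} (easier to satisfy) than its external-investment counterpart, because the $e_i$ on its right-hand side is scaled up by $1+\rho \ge 1$. Both inequalities of \eqref{eq:NE_matching} thus hold for all $i \in S$, so $S$ is a cooperative Nash equilibrium under matching funds $\rho = \delta/e(S)$, and the cost equals $\delta$, completing the argument.
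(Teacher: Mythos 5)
Your proof is correct and follows essentially the same route as the paper's own (much terser) argument: both rest on the identity $(1+\rho)e(S)=e(S)+\delta$ and the observation that the right-hand bound $\tau+(1+\rho)e_i\ge\tau+e_i$ makes the matching-funds condition \eqref{eq:NE_matching} no harder to satisfy than \eqref{eq:NE_investment}. Your write-up is in fact more careful than the paper's, since you explicitly verify the budget constraint $\rho<\bar\rho$ from the hypothesis $\delta<\bar\rho\cdot e(S)$, which the paper uses only implicitly.
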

\begin{proof}
If adding external $\delta$ endowment makes cooperation successful, setting $\rho = \delta/e(S)$, $S$ is also a successful cooperation because of Proposition~\ref{prop:NE_matching} and $\tau+e_i\le \tau+(1+\rho)e_i$ for all $i\in S$.  Moreover, the cost $\rho\cdot e(S)$ is $\delta$ when agents in $S$ cooperate.
\end{proof}

Moreover, there indeed exist games such that the cost of matching fund is strictly less than the cost of external fund.  For example, consider $\tau = 11$, $\Vec{e} = (3, 3, 3)$ and $\Vec{m} = (0.2, 0.2, 0.2)$.  Because the reward levels are too small such that $\frac{e_i}{m_i}=15>\tau+e_i$, no agent will invest by \Cref{prop:NE_investment}.
Thus, the minimal valid external investment is $\delta = 11$.  However, for matching fund, if we set $\rho = 2/3$, $S = \{1, 2, 3\}$ becomes a cooperative equilibrium and the cost is $9\cdot 2/3 = 6$ which is less than the optimal external fund $11$.

The following theorem shows that the solution with optimal rate and the solution with optimal cost always exist, as long as a cooperation is possible.
\begin{restatable}{theorem}{thmexistmatchingfunds}\label{thm:exist_matchingfunds}
Given a pivotal participation game $(\vec{e}, \vec{m}, \tau)$, let $\Delta\subseteq[0,\infty)$ be the set of all valid rates of matching funds $\rho$.
Suppose $\Delta\neq\emptyset$.
We have $\inf\Delta\in\Delta$.
Moreover, there exists a valid solution $(S^\ast,\rho^\ast)$ such that $\rho^\ast\cdot e(S^\ast)\leq\rho\cdot e(S)$ for all possible solutions $(S,\rho)$.
\end{restatable}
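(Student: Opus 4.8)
The plan is to follow the template of the proof of Theorem~\ref{thm:exist_external} almost verbatim, exploiting that the power set of $[n]$ is finite. For each subset $S\subseteq[n]$, let $\Delta(S)$ denote the set of rates $\rho\in[0,\bar\rho)$ for which $(S,\rho)$ is valid, and set $\Delta(\emptyset)=\emptyset$. Since there are finitely many subsets, $\inf\Delta=\min_{S:\Delta(S)\neq\emptyset}\inf\Delta(S)$, and for the cost claim the global minimum cost equals the minimum, over the finitely many $S$ with $\Delta(S)\neq\emptyset$, of the per-subset minimum cost. Thus both assertions reduce to analyzing a single fixed $S\neq\emptyset$, where I would show (i) that $\inf\Delta(S)\in\Delta(S)$ whenever $\Delta(S)\neq\emptyset$, and (ii) that the cost $\rho\cdot e(S)$ attains its infimum over $\Delta(S)$ at $\inf\Delta(S)$.

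Fix $S\neq\emptyset$ and write $l_i=\max\{\tau,e_i/m_i\}$. By Proposition~\ref{prop:NE_matching}, $\rho\in\Delta(S)$ iff $0\le\rho<\bar\rho$ and, for every $i\in S$, both $l_i\le(1+\rho)e(S)$ and $(1+\rho)e(S)<\tau+(1+\rho)e_i$ hold. The first family rewrites as the \emph{closed} lower bounds $\rho\ge l_i/e(S)-1$, while the second rewrites as $(1+\rho)e(S\setminus\{i\})<\tau$, i.e.\ \emph{strict} upper bounds on $\rho$; the budget constraint supplies one more strict upper bound $\rho<\bar\rho$. Hence $\Delta(S)$ is the intersection of $[0,\infty)$ with finitely many closed lower half-lines and finitely many open upper half-lines, i.e.\ a half-open interval $[\rho_{\min}(S),\rho_{\sup}(S))$ with $\rho_{\min}(S)=\max\{0,\max_{i\in S}(l_i/e(S)-1)\}$. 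If this interval is nonempty then $\rho_{\min}(S)<\rho_{\sup}(S)$, so its closed left endpoint $\rho_{\min}(S)=\inf\Delta(S)$ lies in $\Delta(S)$, giving (i).

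For (ii), observe that for fixed $S$ the map $\rho\mapsto\rho\cdot e(S)$ is nondecreasing in $\rho$ (because $e(S)\ge0$), so its infimum over the interval $\Delta(S)$ is attained at the left endpoint $\rho_{\min}(S)$, which by (i) belongs to $\Delta(S)$. Minimizing $\rho_{\min}(S)\cdot e(S)$ over the finitely many $S$ with $\Delta(S)\neq\emptyset$ then produces a genuine minimizer $(S^\ast,\rho^\ast)$ with $\rho^\ast=\rho_{\min}(S^\ast)$, establishing the ``moreover'' part.

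The one genuinely delicate point I expect is verifying that passing to the closed lower endpoint never violates the strict upper constraints, i.e.\ that $\Delta(S)$ really is a half-open interval and not something that opens on the wrong side. This rests on the monotonicity that, since all endowments are nonnegative, $(1+\rho)e(S\setminus\{i\})$ is nondecreasing in $\rho$; hence each strict upper constraint cuts off an upper half-line, and any $\rho$ below a feasible $\rho'$ still satisfies it, so feasibility at some $\rho'\ge\rho_{\min}(S)$ forces feasibility at $\rho_{\min}(S)$. I would also dispatch the degenerate cases $e(S)=0$ (where the lower constraints force $\tau\le0$ and the cost is trivially $0$) and $\Delta(S)=\emptyset$ (where $\rho_{\min}(S)\ge\rho_{\sup}(S)$ and the subset simply contributes nothing to either finite minimization), so that the finite-minimum arguments are clean.
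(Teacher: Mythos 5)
Your proposal is correct and follows essentially the same route as the paper's proof: decompose $\Delta$ over the finitely many subsets $S$, use the monotonicity of the constraints in $\rho$ (closed lower bounds from $l_i\le(1+\rho)e(S)$, strict but downward-closed upper bounds from $(1+\rho)e(S)<\tau+(1+\rho)e_i$, which the paper verifies via the inequality chain $(1+\rho)e(S)\ge u_i(\rho^-)+\rho' e_i=u_i(\rho)$ rather than your cleaner rewriting $(1+\rho)e(S\setminus\{i\})<\tau$) to conclude $\inf\Delta(S)\in\Delta(S)$, and then exploit monotonicity of $\rho\mapsto\rho\cdot e(S)$ to get the minimum-cost solution by a finite minimization. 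The only difference is presentational: you characterize $\Delta(S)$ as a half-open interval, while the paper runs an equivalent explicit case analysis.
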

\begin{proof}
Suppose $\Delta\neq\emptyset$.
Given a subset of agents $S$, let $\Delta(S)$ be the set of all possible values $\rho$ such that $S$ is a Nash equilibrium under the rate of matching funds $\rho$.
Set $\Delta(\emptyset)=\emptyset$.
Since the power set of $[n]$ is finite, it is easy to prove that 
$$\inf\Delta=\min_{S:S\subseteq[n],\Delta(S)\neq\emptyset}\inf\Delta(S).$$
To show $\inf\Delta\in\Delta$, it suffices to show that either $\inf\Delta(S)\in\Delta(S)$ or $\Delta(S)=\emptyset$ for each $S$.
We consider a fixed $S\neq\emptyset$ from now on.

For each $i\in[n]$, let $l_i=\max\{\tau,e_i/m_i\}$ and $u_i(\rho)=\tau+(1+\rho)e_i$.

If $e(S)\geq\min_{i\in S}u_i(0)$, we have $\Delta(S)=\emptyset$.
To see this, suppose $e(S)\geq u_i(0)$ for agent $i\in S$.
For any $\rho\geq0$, we have 
\begin{align*}
    (1+\rho)e(S)&\geq e(S)+\rho\cdot e_i\tag{since $i\in S$}\\
    &\geq u_i(0)+\rho\cdot e_i=u_i(\rho),
\end{align*}
and Proposition~\ref{prop:NE_matching} implies that $S$ cannot be a Nash equilibrium.
If $\min_{i\in S}u_i(0)>e(S)\geq\max_{i\in S}l_i$, we know $\rho=0$ makes $S$ a Nash equilibrium, and $\inf\Delta(S)=0\in\Delta(S)$.
Suppose $e(S)<\min_{i\in S}u_i(0)$ and $e(S)<\max_{i\in S}l_i$ from now on.

Firstly, Proposition~\ref{prop:NE_matching} implies that each $\rho\in\Delta(S)$ must satisfy $(1+\rho)e(S)\geq\max_{i\in S}l_i$, which is $\rho\geq\frac{\max_{i\in S}l_i}{e(S)}-1$.
Let $\rho^-=\frac{\max_{i\in S}l_i}{e(S)}-1$ be this lower bound of $\Delta(S)$.
If $\rho^-\geq\bar\rho$, we know $\Delta(S)=\emptyset$ immediately, so suppose otherwise.
If $(1+\rho^-)e(S)<\min_{i\in S}u_i(\rho^-)$, we know $\inf\Delta(S)=\rho^-\in\Delta(S)$ by Proposition~\ref{prop:NE_matching}.
If $(1+\rho^-)e(S)\geq\min_{i\in S}u_i(\rho^-)$, then $\Delta(S)=\emptyset$.
To see this, suppose there exists $\rho\in\Delta(S)$.
Let $\rho'=\rho-\rho^-$, and $\rho'\geq0$ since $\rho^-$ is a lower bound.
Let $i\in S$ be an agent such that $(1+\rho^-)e(S)\geq u_i(\rho^-)$.
We have
\begin{align*}
    (1+\rho)e(S)&=(1+\rho^-)e(S)+\rho'e(S)\\
    &\geq u_i(\rho^-)+\rho'e(S)\\
    &\geq u_i(\rho^-)+\rho'e_i\tag{since $i\in S$ and $\rho'\geq0$}\\
    &=u_i(\rho),
\end{align*}
contradicting to $\rho\in\Delta(S)$.
We conclude $\inf\Delta\in\Delta$.

The fact $\inf\Delta(S)\in\Delta(S)$ (for all $S$ with $\Delta(S)\neq\emptyset$) shown just now also implies the existence of $(S^\ast,\rho^\ast)$ in the theorem.
Take
$$S^\ast=\argmin_{S:S\subseteq[n],\Delta(S)\neq\emptyset}\left(\inf\Delta(S)\right)\cdot e(S),$$
and $\rho^\ast=\inf\Delta(S^\ast)$.
We have $\rho^\ast\in\Delta(S^\ast)\subseteq\Delta$.
In addition, for any valid solution $(S,\rho)$, we have $\rho\cdot e(S)\geq(\inf\Delta(S))\cdot e(S)\geq\rho^\ast\cdot e(S^\ast)$.
\end{proof}

Again, Theorem~\ref{thm:npc} straightforwardly implies that the problem is NP-hard to approximate to any finite multiplicative factor for both objectives (the rate $\rho$ and the cost $\rho\cdot e(S)$).

\begin{theorem}\label{thm:hardness-of-approximation2}
Given a pivotal participation game instance $(\vec{e}, \vec{m}, \tau)$, letting $\rho^\ast\geq0$ be the minimum rate for a valid solution $(S^\ast,\rho^\ast)$ and $\rho^\dag\geq0$ be the rate for a valid solution $(S^\dag,\rho^\dag)$ that minimizes the cost $\rho\cdot e(S)$, for any $F>0$ that may depend on $(\vec{e}, \vec{m}, \tau)$, it is NP-hard to approximate $\rho^\ast$ or $\rho^\dag\cdot e(S^\dag)$ to within factor $F$.
\end{theorem}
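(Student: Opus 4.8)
The plan is to mirror the proof of Theorem~\ref{thm:hardness-of-approximation}, reducing once more from the decision problem shown NP-complete in Theorem~\ref{thm:npc}. The crucial observation is that setting the matching rate to $\rho=0$ recovers the original game exactly: by Proposition~\ref{prop:NE_matching} with $\rho=0$, a set $S$ is a cooperative Nash equilibrium of the matching-funds game at rate $0$ if and only if it is a cooperative Nash equilibrium of $(\vec{e},\vec{m},\tau)$ in the sense of Proposition~\ref{prop:NE}. Hence $(S,0)$ is a valid solution for some $S$ precisely when the original game admits a cooperative Nash equilibrium.

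From this I would derive that both optima vanish exactly on instances possessing a cooperative Nash equilibrium. For the rate objective, $\rho^\ast=0$ iff there is a valid solution with $\rho=0$, iff the original game has a cooperative Nash equilibrium. For the cost objective, note that any valid solution $(S,\rho)$ has $S\neq\emptyset$ and $e(S)\geq\tau/(1+\rho)>0$ on the reduction instances (where $\tau>0$), so the cost $\rho\cdot e(S)$ equals $0$ iff $\rho=0$; consequently $\rho^\dag\cdot e(S^\dag)=0$ iff the game has a cooperative Nash equilibrium. By Theorem~\ref{thm:npc}, deciding the existence of a cooperative Nash equilibrium is NP-complete, so deciding whether $\rho^\ast=0$, and whether $\rho^\dag\cdot e(S^\dag)=0$, are both NP-complete.

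The approximation hardness then follows from the standard observation that no finite multiplicative factor can separate $0$ from a positive quantity. Suppose an algorithm returned in polynomial time a value $\hat\rho$ with $\rho^\ast\leq\hat\rho\leq F\cdot\rho^\ast$. On a YES instance we would have $\rho^\ast=0$, forcing $\hat\rho=0$; on a NO instance we would have $\rho^\ast>0$ (or no valid solution exists at all, a case the algorithm must flag distinctly), forcing $\hat\rho>0$. Thus the algorithm would decide the NP-complete problem of Theorem~\ref{thm:npc}, a contradiction unless P $=$ NP. The identical argument applies verbatim to the cost objective $\rho^\dag\cdot e(S^\dag)$.

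The reduction itself requires essentially no new work, since I can reuse the instances constructed in Theorem~\ref{thm:npc} unchanged. The only point deserving a line of care is the cost objective: I must confirm that $e(S^\dag)>0$ on every reduction instance, so that a zero cost genuinely certifies $\rho^\dag=0$ rather than a degenerate empty coalition; this is immediate because $\tau>0$ in the partition construction and any valid coalition must reach the (scaled) threshold. I therefore do not anticipate a substantive obstacle here, as all the difficulty has already been absorbed into the reduction of Theorem~\ref{thm:npc}.
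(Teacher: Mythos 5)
Your proposal is correct and follows exactly the paper's approach: the paper's own (two-line) proof likewise observes that Theorem~\ref{thm:npc} makes it NP-complete to decide whether a valid solution with $\rho=0$ exists, and concludes by the standard zero-versus-positive gap argument. Your write-up merely fills in the details the paper leaves implicit (e.g., that $e(S^\dag)>0$ so zero cost forces $\rho^\dag=0$), which are all sound.
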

\begin{proof}
Theorem~\ref{thm:npc} implies that it is NP-complete to decide if there is a valid solution $(S,\rho)$ with $\rho=0$.
This theorem concludes immediately.
\end{proof}

We show that we can achieve additive approximation to both objectives by adaption of Algorithm~\ref{alg:external_investment}, under the very mild assumption $\bar\rho\geq1$ in \Cref{assumption:public_goods_game}.
It is straightforward to check that Algorithm~\ref{alg:matchingFunds} runs in polynomial time.

\begin{algorithm}[htb]
\caption{Approximation algorithm for pivotal participation game with matching funds}
\label{alg:matchingFunds}
\textbf{Input}: a pivotal participation game instance $(\vec{e}, \vec{m}, \tau)$\\
\textbf{Output}: a set of agent $S$ that invest; the rate of matching funds $\rho$
\begin{algorithmic}[1] 
\STATE for each $i\in [n]$, let $l_i=\max\{\tau, \frac{e_i}{m_i}\}$
\STATE compute $\bar\rho=\frac1{\max_{i\in[n]}m_i}-1$
\STATE sort agents such that $l_1\leq l_2\leq\cdots\leq l_n$
\FOR{$i=1,\ldots,n$}
\STATE initialize $S_i=\emptyset$
\STATE let $K_i=\{k:l_k\leq l_i\}$\hfill//different to Algorithm~\ref{alg:external_investment} here
\STATE sort agents in $K_i$ so that $e_{k_1}\geq e_{k_2}\geq\cdots\geq e_{k_{|K_i|}}$
\FOR{$j=1,\ldots,|K_i|$}
\IF{$e(S_i\cup\{k_j\})>l_i$}
\STATE \textbf{break}
\ENDIF
\STATE let $\rho_{\text{temp}}=\frac{l_i}{e(S_i\cup\{k_j\})}-1$
\IF{$l_i\geq\tau+(1+\rho_{\text{temp}})e_{k_j}$}
\STATE \textbf{break}
\ENDIF
\STATE update $S_i=S_i\cup\{k_j\}$, $\rho_i=\rho_{\text{temp}}$
\ENDFOR
\ENDFOR
\STATE find $i^o=\argmin_i\{\rho_i\cdot e(S_i)\}$
\IF{$\rho_{i^o}\geq\bar\rho$}
\STATE \textbf{return} no solution exists
\ENDIF
\STATE \textbf{return} $(S_{i^o},\rho_{i^o})$
\end{algorithmic}
\end{algorithm}

\begin{remark}
 The ``richer people contribute first'' property for Algorithm~\ref{alg:external_investment} mentioned in Remark~\ref{remark:fairness_external} also holds for Algorithm~\ref{alg:matchingFunds}.
\end{remark}

\begin{restatable}{theorem}{thmApproximationMatchingFunds}\label{thm:approximation_matchingfunds}
Given a pivotal participation game $(\vec{e}, \vec{m}, \tau)$ with $\bar\rho\geq1$ such that a cooperative Nash equilibrium exists under certain $\rho\in[0,\bar\rho)$, letting $(S^\ast,\rho^\ast)$ be a valid solution with minimum cost $\rho^\ast\cdot e(S^\ast)$ and $e=\max_{i:1\leq i\leq n}\{e_i\}$, Algorithm~\ref{alg:matchingFunds} finds a collaboration $(S,\rho)$ with cost upper-bounded by $\rho\cdot e(S)\leq\max\{e,\rho^\ast\cdot e(S^\ast)\}$ and rate upper-bounded by $\rho\leq\max\{1,\rho^\ast\}$.
\end{restatable}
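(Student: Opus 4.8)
The plan is to follow the proof of Theorem~\ref{thm:approximation_external} almost verbatim, viewing the cost $\rho\cdot e(S)$ as the exact analogue of the external investment $\delta$ (which is licensed by Proposition~\ref{prop:compare}), and to isolate the one genuinely new ingredient: the pivotality test on Line~13, which in Algorithm~\ref{alg:matchingFunds} replaces the static filter $l_k\le l_i<u_k$ that defines $K_i$ in Algorithm~\ref{alg:external_investment}. I would first record validity of every candidate $(S_i,\rho_i)$. By construction $(1+\rho_i)e(S_i)=l_i$; Line~9 keeps $e(S_i)\le l_i$, so $\rho_i\ge 0$; and every $k\in S_i$ has $l_k\le l_i=(1+\rho_i)e(S_i)$, giving the left inequality of \eqref{eq:NE_matching} and, since $l_i\ge\tau$, the success of the project. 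For the right inequality I would use that agents enter in decreasing order of endowment (Line~7): the last agent added, $k_{\mathrm{last}}$, has the smallest endowment in $S_i$ and passed the Line~13 test at the final rate, i.e. $l_i<\tau+(1+\rho_i)e_{k_{\mathrm{last}}}$; since $e_k\ge e_{k_{\mathrm{last}}}$ for every $k\in S_i$, the right inequality of \eqref{eq:NE_matching} then holds for all $k\in S_i$. Line~20 guarantees the returned rate satisfies $\rho<\bar\rho$.

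Because Line~19 returns the cost-minimizer, it suffices to bound the cost of a single well-chosen iteration $i^\dagger$. Following the external proof I set $v^\ast=(1+\rho^\ast)e(S^\ast)$ and let $i^\dagger$ be the largest index with $v^\ast\ge l_{i^\dagger}$. Pushing $\rho^\ast$ down to its smallest feasible value (the right inequalities of \eqref{eq:NE_matching} only relax as $\rho$ decreases) lets me assume either $v^\ast=l_{i^\dagger}$, in which case the left inequality forces $l_k\le v^\ast$ for all $k\in S^\ast$ so that $S^\ast\subseteq K_{i^\dagger}$, or $\rho^\ast=0$ with $v^\ast=e(S^\ast)>l_{i^\dagger}$, which I fold into the ``cost $\le e$'' estimate below. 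I then split on how iteration $i^\dagger$ terminates. If it exits on Line~9 while testing $k_j$, then $e(S_{i^\dagger})>l_{i^\dagger}-e_{k_j}\ge l_{i^\dagger}-e$ and the cost $l_{i^\dagger}-e(S_{i^\dagger})<e$, exactly as before. If the inner loop runs to completion, then $S_{i^\dagger}=K_{i^\dagger}\supseteq S^\ast$, so $e(S_{i^\dagger})\ge e(S^\ast)$ and the cost is $l_{i^\dagger}-e(S_{i^\dagger})\le l_{i^\dagger}-e(S^\ast)=\rho^\ast e(S^\ast)$.

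The remaining and hardest case is when iteration $i^\dagger$ exits on Line~13: greedy has then accepted only the $j-1$ largest agents of $K_{i^\dagger}$, and because it commits to the largest endowments first it may permanently miss an optimal set that avoids them. The lever I would use is the structural consequence of $\bar\rho\ge1$ (hence $m_k\le\frac12$): every $k\in K_{i^\dagger}$ obeys $e_k\le m_k\,l_{i^\dagger}\le l_{i^\dagger}/2$, so no admissible agent exceeds half the target. I would then translate the Line~13 break into $e_{k_j}\le(1-\tau/l_{i^\dagger})\,e(S_{i^\dagger}\cup\{k_j\})$ and contrast it with the pivotality of the smallest member of $S^\ast$, namely $e_{\min}^\ast>(1-\tau/l_{i^\dagger})\,e(S^\ast)$, read off from the right inequality of \eqref{eq:NE_matching}. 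When $|S^\ast|\le j-1$ the accepted prefix already contains the $|S^\ast|$ largest endowments of $K_{i^\dagger}$, so $e(S_{i^\dagger})\ge e(S^\ast)$ and the cost is again at most $\rho^\ast e(S^\ast)$; the delicate sub-case is $|S^\ast|\ge j$, where I would combine the prefix monotonicity $e_{k_1}\ge\cdots\ge e_{k_{j-1}}\ge e_{k_j}$ with the two displayed inequalities and the half-target bound to try to force $e(S_{i^\dagger})\ge l_{i^\dagger}-e$, closing the cost at $\max\{e,\rho^\ast e(S^\ast)\}$ through the additive slack $e$. I expect this to be the crux: it is exactly the scenario in which a largest-first rule can stall on a large but ``toxic'' agent while a cheaper balanced set is ignored, and making the inequality $e(S_{i^\dagger})\ge l_{i^\dagger}-e$ airtight — rather than losing an extra additive $e$ — is where the real work (and, I suspect, any hidden genericity hypothesis) lives.

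For the rate bound $\rho\le\max\{1,\rho^\ast\}$ I would run the same case analysis through the identity $\rho_i=(l_i-e(S_i))/e(S_i)=\mathrm{cost}_i/e(S_i)$, noting that the constants $1$ and $e$ in the two bounds come from the \emph{same} half-target fact. At iteration $i^\dagger$ the completion case gives $\rho_{i^\dagger}=l_{i^\dagger}/e(S_{i^\dagger})-1\le l_{i^\dagger}/e(S^\ast)-1=\rho^\ast$; in the Line~9 break case the bound $e_{k_j}\le l_{i^\dagger}/2$ forces $e(S_{i^\dagger})>l_{i^\dagger}/2$, whence $\rho_{i^\dagger}<1$; and the Line~13 case is handled in parallel with the cost argument. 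The extra subtlety, absent from the external proof, is that Line~19 returns the cost-minimizing rather than the rate-minimizing candidate, so I must check that lowering the cost below $\mathrm{cost}(S_{i^\dagger})$ cannot push the returned rate past $\max\{1,\rho^\ast\}$; I would do this by feeding the established cost bound back into $\rho_{i^o}=\mathrm{cost}_{i^o}/e(S_{i^o})$ after lower-bounding $e(S_{i^o})$ via $(1+\rho_{i^o})e(S_{i^o})=l_{i^o}\ge\tau$. Overall I expect the Line~13 sub-case $|S^\ast|\ge j$ to be the principal obstacle, with this cost-to-rate transfer for the selected solution the secondary one.
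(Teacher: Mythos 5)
Your validity argument and your two easy terminations of iteration $i^\dag$ (the Line~9/10 break, giving cost $<e$ and rate $<1$, and the run-to-completion case, giving $S^\ast\subseteq K_{i^\dag}=S_{i^\dag}$ and cost $\le\rho^\ast e(S^\ast)$) match the paper's proof. The genuine gap is exactly the case you flag as the crux and leave open: iteration $i^\dag$ exiting through the Line~13/14 test. Your proposed lever --- the half-target bound $e_k\le l_{i^\dag}/2$ together with an attempt to force $e(S_{i^\dag})\ge l_{i^\dag}-e$ --- is not how this case closes; in fact $e(S_{i^\dag})\ge l_{i^\dag}-e$ is the wrong target, since whenever $\rho^\ast e(S^\ast)>e$ the theorem permits (and the algorithm may produce) candidates violating it. There is also no hidden genericity hypothesis.

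The missing idea is to play the greedy candidates against the \emph{cost-optimality} of $(S^\ast,\rho^\ast)$. First, $\rho_{\text{temp}}\ge\rho^\ast$ throughout the inner loop at iteration $i^\dag$: the $\rho_{\text{temp}}$ values at which agents are actually added are decreasing, so if some agent were added with $\rho_{\text{temp}}<\rho^\ast$, the final candidate $(S_{i^\dag},\rho_{i^\dag})$ would be valid (by your own validity argument, with $\rho_{i^\dag}<\rho^\ast<\bar\rho$) and would satisfy $(1+\rho_{i^\dag})e(S_{i^\dag})=l_{i^\dag}=(1+\rho^\ast)e(S^\ast)$ with $\rho_{i^\dag}<\rho^\ast$, hence $e(S_{i^\dag})>e(S^\ast)$ and cost $l_{i^\dag}-e(S_{i^\dag})<l_{i^\dag}-e(S^\ast)=\rho^\ast e(S^\ast)$, contradicting minimality of the cost. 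Second, letting $k_m$ be the minimum-endowment agent of $S^\ast$, validity of the optimum gives $l_{i^\dag}=(1+\rho^\ast)e(S^\ast)<\tau+(1+\rho^\ast)e_{k_m}$; combining this with $\rho_{\text{temp}}\ge\rho^\ast$ and with $e_{k_j}\ge e_{k_m}$ for every agent $k_j$ processed no later than $k_m$, the Line~13 test $l_{i^\dag}\ge\tau+(1+\rho_{\text{temp}})e_{k_j}$ can never fire before $k_m$ has been added. Consequently, if the Line~13/14 break occurs at iteration $i^\dag$ at all, it occurs only after $S^\ast\subseteq S_{i^\dag}$, so your ``delicate sub-case'' $|S^\ast|\ge j$ is vacuous, and the bound cost $=l_{i^\dag}-e(S_{i^\dag})\le l_{i^\dag}-e(S^\ast)=\rho^\ast e(S^\ast)$ together with $\rho_{i^\dag}\le\rho^\ast$ follows exactly as in your completion case; the rate bound thus needs no separate cost-to-rate transfer, since the only place $\bar\rho\ge1$ is used is the Line~10 break case, where $\rho_i<e_{k_j}/e(S_i)\le1$ because every agent already in $S_i$ has endowment at least $e_{k_j}$.
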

\begin{proof}
We first show that the algorithm always outputs valid solutions.
We show this by showing that the solution $(S_i,\rho_i)$ output in each iteration is valid, as long as $\rho_i<\bar\rho$ (the solution with $\rho_i\geq\bar\rho$ will be filtered through Line~19-22).
We first show that $S_i\neq\emptyset$ so that $\rho_i$ has been defined (equivalently, Line~16 has been reached at least once).
For the first element $k_1\in K_i$ that we attempt to add to $S_i$, we have $e_{k_1}<\frac{e_{k_1}}{m_{k_1}}\leq l_{k_1}\leq l_i$, so the if-condition at Line~9 will not be satisfied for the first iteration of the for-loop at Line~8.
Moreover, for the first iteration of the for-loop at Line~8, we have $l_i=(1+\rho_{\text{temp}})e(\emptyset\cup\{k_1\})=(1+\rho_{\text{temp}})e_{k_1}<\tau+(1+\rho_{\text{temp}})e_{k_1}$, so the if-condition at Line~13 is not satisfied.
Thus, Line~16 is reachable and $S_i\neq\emptyset$.
Next, we have $(1+\rho_i)e(S_i)=l_i$ and each $k\in S_i$ satisfies $l_k\leq l_i$ (guaranteed by Line~6 of the algorithm), so the left part of Eqn.~\eqref{eq:NE_matching} in Proposition~\ref{prop:NE_matching} is satisfied.
The if-statement at Line~13 ensures that the right part of Eqn.~\eqref{eq:NE_matching} in Proposition~\ref{prop:NE_matching} is satisfied for the last agent $k_j$ added to $S_i$.
Since agents in $K_i$ are ordered with decreasing endowments and the term $\tau+(1+\rho)e_i$ on the right-hand side of Eqn.~\eqref{eq:NE_matching} is increasing in $e_i$, the right part of Eqn.~\eqref{eq:NE_matching} in Proposition~\ref{prop:NE_matching} should be satisfied for all agents in $S_i$.
Thus, $S_i$ is a Nash equilibrium given rate of matching funds $\rho_i$.
It remains to prove the two approximation guarantees in the theorem.

Suppose agents are ordered such that $l_1\leq\cdots\leq l_n$.
Let $(S,\delta)$ be the solution output by the algorithm. 
Let $(S_i,\delta_i)$ be the solution output at $i$-th iteration of the for-loop at Line~4.
Let $(S^\ast,\delta^\ast)$ be the optimal solution.

First of all, if the ``break'' statement at Line~10 has ever been executed, we show that the two approximation guarantees in the theorem are satisfied.
Suppose, at the $i$-th iteration of the for-loop at Line~4, the for-loop at Line~8 exits at Line~10.
We know $e(S_i\cup\{k_j\})>l_i$. 
This means adding $k_j$ to $S_i$ would make the total investment more than $l_i$.
We have $l_i-e(S_i)<e_{k_j}$ and $l_i=(1+\rho_i)e(S_i)$.
Therefore, $\rho_i\cdot e(S_i)=l_i-e(S_i)<e_{k_j}<e$.
Moreover, since each agent in $S_i$ has endowment at least $k_j$ by Line~7, we have $\rho_i<\frac{e_{k_j}}{e(S_i)}\leq1\leq\bar{\rho}$.
The solution $(S_i,\rho_i)$ will not be filtered at Line~19-22, and both approximation guarantees, $\rho\cdot e(S)\leq\max\{e,\rho^\ast\cdot e(S^\ast)\}$ and $\rho\leq\max\{1,\rho^\ast\}$, are satisfied.
We will assume the ``break'' statement at Line~10 has never been executed from now on.

Next, we consider the optimal solution $(S^\ast, \rho^\ast)$.
Let $i^\dag$ be the agent with the largest index such that $(1+\rho^\ast)e(S^\ast)\geq l_{i^\dag}$.
Each agent $k\in S^\ast$ must have $l_k\leq l_{i^\dag}$, for otherwise $i^\dag$ should not have been the agent with the largest index.
Therefore, $S^\ast\subseteq K_{i^\dag}$.
The next two paragraphs show that 
$$(1+\rho^\ast)e(S^\ast)=l_{i^\dag}.$$

Suppose otherwise $(1+\rho^\ast)e(S^\ast)>l_{i^\dag}$.
We have $\rho^\ast=0$, for otherwise we can always decrease $\rho^\ast$ a little bit and still guaranteeing $S^\ast$ is a Nash equilibrium (notice that we have $l_{i^\dag}<(1+\rho^\ast)e(S^\ast)<\tau+(1+\rho^\ast)e_i$ for each $i\in S^\ast$, and all the inequalities are strict).
Therefore, we have $\rho^\ast=0$ and $e(S^\ast)>l_{i^\dag}$ now.
By Proposition~\ref{prop:NE_matching}, each $i\in e(S^\ast)$ must satisfy $e(S^\ast)<\tau+e_i$, which implies $l_{i^\dag}<\tau+e_i$.

Let $k_m$ be the agent in $S^\ast$ with minimum endowment.
We have $l_{i^\dag}<\tau+e_{k_m}$.
Since we have assumed the ``break'' statement at Line~10 has never been executed, the for-loop at Line~8 can either terminate when all the element of $K_{i^\dag}$ are added to $S$ or exit at Line~14.
Since $l_{i^\dag}<\tau+e_{k_m}\leq\tau+(1+\rho_{\text{temp}})e_{k_m}\leq\tau+(1+\rho_{\text{temp}})e_{k_j}$ for all $\rho_{\text{temp}}\geq0$ and all $k_j\in K_{i^\dag}$ before $k_m$, we know that the for-loop is still going on by the time $k_m$ is added to $S_{i^\dag}$.
Therefore, $S^\ast\subseteq S_{i^\dag}$.
However, this would imply $e(S_{i^\dag})\geq e(S^\ast)>l_{i^\dag}$, which contradicts to our assumption that the ``break'' statement at Line~10 has not been executed.
We have proved $(1+\rho^\ast)e(S^\ast)=l_{i^\dag}$.

Finally, we show that our assumption that the ``break'' statement at Line~10 has never been executed implies the solution output by our algorithm is optimal.
We first notice that $\rho_{\text{temp}}$ is decreasing after each iteration of the for-loop at Line~8, because $e(S_{i^\dag})$ is increasing when more and more elements are added to $S_{i^\dag}$.
If we have $\rho_{\text{temp}}<\rho^\ast$ at certain iteration, we have $\rho_{i^\dag}<\rho^\ast$. Since $(1+\rho_{i^\dag})e(S_{i^\dag})=l_{i^\dag}=(1+\rho^\ast)e(S^\ast)$, this would imply $e(S_{i^\dag})>e(S^\ast)$, which further implies $\rho_{i^\dag}\cdot e(S_{i^\dag})=l_{i^\dag}-e(S_{i^\dag})<l_{i^\dag}-e(S^\ast)=\rho^\ast\cdot e(S^\ast)$, contradicting to that $(S^\ast,\rho^\ast)$ is optimal.
Therefore, $\rho_{\text{temp}}\geq\rho^\ast$ throughout the for-loop at Line~8.

Let $k_m$ be the agent in $S^\ast$ with minimum endowment again.
Since $l_{i^\dag}<\tau+(1+\rho^\ast)e_{k_m}\leq\tau+(1+\rho_{\text{temp}})e_{k_m}$ (otherwise, the optimal solution $(S^\ast,\rho^\ast)$ should not have been valid), we know that the for-loop at Line~8 is still going on by the time $k_m$ is added to $S_{i^\dag}$.
As a result, $S^\ast\subseteq S_{i^\dag}$ and $e(S^\ast)\leq e(S_{i^\dag})$.
This implies $\rho_{i^\dag}\cdot e(S_{i^\dag})=l_{i^\dag}-e(S_{i^\dag})\leq l_{i^\dag}-e(S^\ast)=\rho^\ast\cdot e(S^\ast)$, which further implies $\rho_{i^\dag}\leq\rho^\ast\cdot\frac{e(S^\ast)}{e(S_{i^\dag})}\leq\rho^\ast$.
Thus, $(S_{i^\dag},\rho_{i^\dag})$ is optimal in terms of both objectives, and so is the final output $(S,\rho)$ of Algorithm~\ref{alg:matchingFunds}.
\end{proof}

\begin{remark}
In Theorem~\ref{thm:approximation_matchingfunds}, we set $\rho^\ast$ be the optimal solution in terms of the \emph{cost} $\rho^\ast\cdot e(S^\ast)$.
If we consider $\rho^\ast$ be the optimal solution in terms of the minimum \emph{rate}, then Algorithm~\ref{alg:matchingFunds} can still obtain guarantee $\rho\leq\max\{1,\rho^\ast\}$, with Line~19 modified to ``find $i^o=\argmin_i\rho_i$''.
 The proof of this is similar to the proof of Theorem~\ref{thm:approximation_matchingfunds} and is left to the readers.
 
 However, we note that the optimal solution in terms of the cost is not always identical to the optimal solution in terms of the rate.
 Consider Example~\ref{ex:big}.
 The solution with optimal cost is $(S=\{1,2,3\},\rho=\frac23)$ and the solution with optimal rate is $(S=\{1,4\},\rho=\frac{7}{11})$.
\end{remark}

Similarly to adding an external investment (\Cref{ex:harm_external}), matching fund can also hurt cooperation.
\begin{example}[Harm of matching fund]
Let $\tau = 9$, $\vec{e} = (4,4,4)$ and $\vec{m} = (4/11, 4/11, 4/11)$.  $S=\{1,2,3\}$ is a cooperative Nash equilibrium.
However, if we use matching fund with $\rho = 1/4$, by Proposition~\ref{prop:NE_matching}, there is no cooperative Nash equilibrium.
\end{example}

\section{Discussion and Conclusion}
In the paper, we study a variant of public goods games, the pivotal participation game, and examine when cooperation can happen in the game.  
First, we show that it is NP-complete to decide the existence of cooperation.  Then we consider two interventions to help agents to cooperate: external investment and matching fund.  We propose two algorithms that can help agents to cooperate with near minimum cost.  Finally, we show that matching fund can be more powerful than an external investment.

There are several interesting directions for future work.  The first is considering more fine-grained interventions.  For instance, we may change each agent's endowment or reward level by taxes or subsidies.  However, these interventions require more information, and the objective is not as well-defined. 
Alternatively, we can consider the information aspect of the interventions.  In the paper, we know the game's parameter and want to design algorithms with reasonable computational complexity.  Those parameters may be unknown and private to agents, and agents may not report them truthfully. However, we believe the computational issue is fundamental and will also arise in the unknown parameter setting.  Finally, some open problems are related to our approximation algorithms, e.g., hardness results about additive approximation for the optimal external investment or matching fund.  Our current construction does not directly extend to these settings.

\bibliographystyle{plainnat}
\bibliography{ijcai21}

\end{document}